\documentclass[prd,aps,floats,preprint,preprintnumbers,amssymb,longbibliography,nofootinbib]{revtex4-2}

\usepackage{graphicx}
\usepackage{amsthm}
\usepackage{enumitem}
\usepackage{latexsym}
\usepackage{amsfonts}
\usepackage{amssymb}

\usepackage[dvipsnames]{xcolor}
\usepackage{CJK}
\usepackage[export]{adjustbox}
\usepackage{amsmath}
\usepackage{slashed}
\usepackage{dcolumn}
\usepackage{verbatim}
\usepackage{appendix}
\usepackage{float}
\usepackage{multirow}
\usepackage{soul}
\usepackage[normalem]{ulem}
\usepackage{hyperref}
\usepackage{epsfig}
\usepackage{nicefrac}

\usepackage{microtype}

\definecolor{linkblack}{RGB}{20,20,20}
\definecolor{citeblue}{RGB}{0,70,140}
\definecolor{urlblue}{RGB}{0,90,120}

\hypersetup{
  colorlinks=true,
  linkcolor=linkblack,
  citecolor=citeblue,
  urlcolor=urlblue
}

\newtheorem{thm}{\protect\theoremname}
  \theoremstyle{plain}

  \newtheorem{cor}[thm]{\protect\corollaryname}
  
  \theoremstyle{remark}
\newtheorem{remark}[thm]{Remark}
  \newtheorem{lemma}[thm]{Lemma}       

  \providecommand{\definitionname}{Definition}
  \providecommand{\propositionname}{Proposition}
  \providecommand{\theoremname}{Theorem}
  \providecommand{\corollaryname}{Corollary}
  \providecommand{\conjecturename}{Conjecture}

\newtheorem{definition}{Definition}

\DeclareMathOperator{\sech}{sech}

\newcommand{\fig}[1]{Fig.~\ref{#1}}

\usepackage{setspace}
\usepackage{etoolbox}

\makeatletter
\patchcmd{\@footnotetext}
  {\reset@font\footnotesize}
  {\reset@font\footnotesize\setstretch{1}}
  {}{}
\makeatother

\begin{document}
\title{Affine ANEC selects the closed FRW branch for geodesically complete cosmology}

\author{Nathan L. Burwig} 
\email{nburwig@asu.edu}
\affiliation{Department of Physics, Arizona State University, Tempe, Arizona 85287, USA}
\affiliation{Beyond Center for Fundamental Concepts in Science, Arizona State University, Tempe, Arizona 85287, USA}

\author{Damien A. Easson}
\email{easson@asu.edu}
\affiliation{Department of Physics, Arizona State University, Tempe, Arizona 85287, USA}
\affiliation{Beyond Center for Fundamental Concepts in Science, Arizona State University, Tempe, Arizona 85287, USA}

\begin{abstract}
We study the relation between geodesic completeness, the averaged null energy
condition (ANEC), and spatial curvature in Friedmann--Robertson--Walker (FRW)
cosmology within classical general relativity. Using the affinely parameterized
ANEC along radial null geodesics, we prove that non-static flat or open FRW
spacetimes in the regular classes considered here cannot be both null
geodesically complete and ANEC-satisfying. Bounded oscillatory or cyclic
flat/open models do not circumvent the obstruction: the negative affine-ANEC bulk
term accumulates over infinitely many cycles, giving \(I_{\rm ANEC}=-\infty\)
for non-static periodic cases. Equivalently, within these classes, non-static
ANEC-satisfying flat or open models are null incomplete. The sign obstruction is absent in closed \((k=+1)\) FRW geometry, where the
positive curvature term enters the affine ANEC identity with the opposite sign
and can support nonsingular, geodesically complete cosmologies with ordinary
NEC-respecting matter. We give explicit closed-FRW scalar-field constructions,
including a fully analytic quadratic reconstruction and a cubic-root
reconstruction in closed quadrature, and contrast them with their flat
realizations, which require NEC-violating support. Furthermore, we quantify how positive
curvature can bias flat-model reconstructions toward an effective phantom
equation of state, finding only a percent-level effect under current curvature
bounds. The result is a curvature classification of ANEC-compatible eternal FRW
cosmology: flat and open branches are obstructed, while the closed branch admits
explicit complete realizations, with global de Sitter appearing as the vacuum
limiting representative.
\end{abstract}
\maketitle
\newpage

\section{Introduction}
Whether the universe had a first moment is fundamentally a question about
past geodesic incompleteness.  The Hawking--Penrose singularity theorems
show that, under broad geometric, causality, and energy-condition
assumptions, cosmological and gravitational-collapse spacetimes contain
incomplete causal geodesics
\cite{Penrose:1964wq,Hawking:1966sx,Hawking:1966jv,Hawking:1967ju,Hawking:1970zqf,Hawking:1973uf}. The Borde--Guth--Vilenkin theorem gives a different incompleteness criterion for spacetimes with positive averaged expansion \cite{Borde:2001nh,Kinney:2023urn,Geshnizjani:2023hyd}.  These results motivate the search for nonsingular or past-complete cosmological
scenarios, including emergent, loitering, bouncing, cyclic, limiting-curvature,
and closed-universe or quantum-cosmology constructions
\cite{Ellis:2002we,Ellis:2003qz,Sahni:1991ks,Linde:1995xm,Linde:2003hc,
Graham:2011nb,Easson:2011zy,Cai:2012va,Battefeld:2014uga,
Brandenberger:2016vhg,Novello:2008ra,Mukhanov1992,Brandenberger1993}.
Several of these models already exploit positive curvature, energy-condition
loopholes, or asymptotic/static phases to evade the standard singularity
arguments.  What is not isolated in this literature is a curvature-branch classification
based on the affinely parameterized averaged null energy condition (ANEC): namely, a criterion identifying which
FRW curvature sectors can be simultaneously non-static, null complete,
nonsingular, and ANEC-compatible.

The ANEC provides a powerful discriminator.  Pointwise energy conditions are often violated in semiclassical physics, but the ANEC remains a robust global constraint in many quantum-field-theoretic settings, especially along complete achronal null geodesics \cite{Wald:1991xn,Graham:2007va,Hartman:2016lgu,Wall:2009wi,Graham:2005cq,Fewster:2006uf}. As we will discuss, for a radial null geodesic in FRW spacetime sourced by a
perfect fluid with energy density \(\rho(t)\) and pressure \(p(t)\), the affine
form of the ANEC is
\begin{equation}
I_{\rm ANEC}=\int_{-\infty}^{+\infty}
\frac{\rho(t)+p(t)}{a(t)}\,dt .
\end{equation}
This weighting by the affine measure is essential and is the basis for the
classification proved below.~\footnote{Within the exact FRW ansatz, homogeneity and isotropy reduce the null
contraction for a perfect-fluid source to a single function of cosmic time. The
obstruction proved below should therefore be understood as a curvature-selection
result within regular FRW metrics. It does not exclude more general
inhomogeneous nonsingular cosmologies, where spatial gradients, anisotropic
stresses, shear, Weyl curvature, or backreaction can affect null propagation
and the stress-energy sampled along individual null geodesics.}

Our main result is that spatial curvature controls whether a nontrivial FRW
cosmology can be both complete and ANEC-compatible.  In the flat and open branches, the finite-interval ANEC identity contains only negative bulk contributions, up to a boundary term.  Under bounded curvature and the regular endpoint assumptions stated precisely in Sec.~\ref{sec:k0-anec}, null completeness controls this boundary term and forces
\begin{equation}
I_{\rm ANEC}<0
\end{equation}
for every non-static flat or open model in the class.  Thus, an ANEC-satisfying flat or
open FRW cosmology cannot be null geodesically complete both to the past
and to the future.

In this respect, the closed \(k=+1\) branch is favorably distinguished:
the curvature term enters the affine ANEC identity with the opposite sign.  Positive spatial curvature can therefore support nonsingular bounces without exotic NEC-violating matter \cite{Molina-Paris:1998xmn,Easson:2026ret}.  The simplest illustration is the scale factor $a(t)=h^{-1}\cosh(ht)$: in closed FRW it identifies global de Sitter and saturates the NEC and ANEC, whereas the same scale factor in flat FRW requires phantom-matter support.  We extend this comparison to a fully analytic scalar-field realization of a quadratic bounce and to a cubic-root reconstruction given in quadrature.

This paper is the companion to the short paper~\cite{Burwig:2025hrr}.
There we stated the curvature obstruction and its main consequences. Here we provide the full affine derivation, the finite-interval identities,
the supporting lemmas and proof refinements, the treatment of bounded
oscillatory and cyclic models, the closed-FRW scalar reconstructions, and the
example analysis needed to make the classification explicit. We also give a detailed curvature-reconstruction estimate: if a slightly closed
universe is analyzed with a flat model, curvature can bias the inferred
dark-energy equation of state toward \(w<-1\), though only at the percent level
under current curvature bounds.

The paper is organized as follows. Section~\ref{sec:examples} reviews the
energy-condition diagnostics and develops the motivating closed-versus-flat
examples, including the cosh, quadratic, and cubic-root scale factors.
Section~\ref{sec:geodesic-completeness} recalls the FRW null
geodesic-completeness criterion and fixes the affine ANEC conventions used in
the proofs. Section~\ref{sec:k0-anec} proves the flat-FRW ANEC
obstruction, including the regular-end theorem, the saturated flat case, and
bounded oscillatory or periodic models. Section~\ref{sec:anec-k1} derives the
closed \(k=+1\) affine-ANEC identity and explains how the positive curvature
term avoids the flat/open sign obstruction. Section~\ref{sec:anec-kopen}
provides the corresponding open \(k=-1\) result. Section~\ref{sec:phantom-mimic}
estimates the size of curvature-induced phantom mimicry in flat-model
dark-energy reconstructions. Section~\ref{sec:conclusions} summarizes the
resulting curvature classification.

\newpage
\section{Energy Conditions and Motivating Examples}\label{sec:examples}

\subsection{Classical energy-condition diagnostics}\label{ecsec}

We use the standard perfect-fluid energy conditions as diagnostics for whether a
model requires exotic stress energy or can be supported by ordinary canonical
matter. Throughout,
\begin{equation}
    T_{\mu\nu} = (\rho + p) U_\mu U_\nu + p\, g_{\mu\nu},
\end{equation}
where \(\rho\) is the energy density, \(p\) is the isotropic pressure, and
\(U^\mu\) is the fluid four-velocity. We work with signature $(-,+,+,+)$ and set $8\pi G=1$. For a perfect fluid in four dimensions,
the pointwise conditions reduce to
\begin{align}
\text{WEC:}\qquad &\rho\ge0,\qquad \rho+p\ge0,\\
\text{NEC:}\qquad &\rho+p\ge0,\\
\text{DEC:}\qquad &\rho\ge0,\qquad |p|\le\rho,\\
\text{SEC:}\qquad &\rho+p\ge0,\qquad \rho+3p\ge0.
\end{align}
Equivalently, the DEC requires the energy-current vector
\[
J^\mu=-T^\mu{}_{\nu}t^\nu
\]
to be future-directed causal for every future-directed timelike vector \(t^\mu\).
The logical implications are
\[
\text{DEC}\Rightarrow\text{WEC}\Rightarrow\text{NEC},
\qquad
\text{SEC}\Rightarrow\text{NEC},
\]
while SEC does not imply WEC in general. The NEC is the pointwise condition
most directly tied to the ANEC below, while the SEC is the condition generically
violated by accelerated expansion and by any FRW bounce.

\subsection{The Averaged Null Energy Condition (ANEC)}

Local violations of the pointwise energy conditions are familiar in semiclassical physics, including the Casimir effect, Hawking fluxes, non-minimally coupled fields, and higher-derivative effective theories \cite{Epstein:1965zza,Casimir:1948dh,Barcelo:2002bv,Chatterjee:2012zh,Rubakov:2014jja,Dubovsky:2005xd,Nicolis:2009qm,Kobayashi:2010cm,Deffayet:2010qz,Sawicki:2012pz}. The ANEC is weaker and more global: it allows localized negative null energy but constrains the complete affine average along a null geodesic. It is known to hold in broad classes of quantum field theories in flat spacetime for complete achronal null geodesics, is tied to causality and positivity constraints \cite{Wald:1991xn,Graham:2007va,Hartman:2016lgu}, and helps exclude traversable wormholes, closed time-like curves and related exotic causal structures \cite{Morris:1988tu,Friedman:1993ty,Wall:2009wi,Visser:1996iv}.

Here we impose the ANEC as a classical global matter condition along complete
radial null geodesics,
\begin{equation}
\label{anec}
\int_\gamma T_{\mu\nu}k^\mu k^\nu\,d\lambda\ge0,
\end{equation}
where \(\gamma\) is a complete null geodesic, \(k^\mu=dx^\mu/d\lambda\), and
\(\lambda\) is an affine parameter. The rigorous quantum-field-theoretic
theorems motivating ANEC often assume achronality; complete null geodesics in
closed FRW spacetimes need not be globally achronal because of refocusing on
\(S^3\). This caveat affects the semiclassical interpretation of the condition,
not the classical affine identity used below.

Related recent work made use of the smeared null energy condition to constrain
semilocal NEC violation in cosmological settings, including phantom dark energy
and nonsingular bouncing phases~\cite{Moghtaderi:2025cns}. Our focus is instead
the complete affine ANEC as a curvature-selection criterion: which FRW branches
can be simultaneously non-static, null complete, and ANEC-compatible? In an FRW spacetime, using the affinely parameterized radial null tangent derived in Appendix~\ref{appaffine}, one has $dt/d\lambda=1/a(t)$ up to a constant rescaling.  For a perfect fluid this gives
\begin{equation}\label{anecint}
I_{\rm ANEC}\equiv \int_{-\infty}^{+\infty}\frac{\rho(t)+p(t)}{a(t)}\,dt \ge0 \, .
\end{equation}
This is the form used throughout the paper.  The factor $1/a(t)$ is not a convention, but rather the affine weighting.  The results below are therefore statements about the averaged null energy measured along complete null geodesics, not merely about the coordinate-time integral of $\rho+p$.

\subsection{De Sitter bounce ($k=1$)}
A familiar and illuminating  example motivates our discussion; namely, the well-known de Sitter bounce realized in a closed ($k = 1$) FRW spacetime.

Canonical global de Sitter may be written in closed FRW form as
\begin{equation}
ds^2 = -dt^2 + \frac{1}{h^2} \cosh^2(ht)\, d\Omega_3^2,
\end{equation}
where $d\Omega_3^2$ is the metric on the unit 3-sphere. The Friedmann equations with general curvature $k$ are
\begin{align}
H^2 + \frac{k}{a^2} &= \frac{1}{3} \rho, \\
\dot{H} - \frac{k}{a^2} &= -\frac{1}{2} (\rho + p),
\end{align}
where $H(t) = \dot{a}/a = h \tanh(ht)$ and $\dot{H} = h^2 \, \mathrm{sech}^2(ht)$. Since
\[
\ddot a(t)=h\cosh(ht)>0,
\]
the cosmology is accelerating for all \(t\).

Plugging into the Friedmann equations and taking $k=1$ yields
\begin{align}
\rho &= 3\left[H^2 + \frac{1}{a^2}\right] = 3h^2, \\
\rho + p &= -2\left[\dot{H} - \frac{1}{a^2} \right] = 0.
\end{align}
Thus, the NEC is saturated: \(\rho+p=0\) for all \(t\). Consequently,
the affine ANEC integral in Eq.~(\ref{anecint}) vanishes identically and is
therefore likewise saturated. The cosmological bounce is sourced by a positive
cosmological constant, with equation of state parameter \(w=-1\), and enabled by the positive spatial
curvature of the closed slicing. No exotic matter is required. Among the
standard pointwise energy conditions, only the strong energy condition (SEC) is
violated.

We now contrast this with a cosmological bounce generated by the same scale factor 
\begin{equation}\label{acoshbounce}
a(t) = \frac{1}{h} \cosh(ht),
\end{equation}
but in a flat universe with $k=0$.~\footnote{This model was discussed in detail as an example of a \emph{transcendental} bounce in \cite{Easson:2024fzn}.}

\subsection{Transcendental bounce ($k = 0$)}

Now consider the same bounce scale factor Eq.~(\ref{acoshbounce}), living in a flat FRW universe. The Friedmann equations simplify to
\begin{align}
H^2 &= \frac{1}{3} \rho, \\
\dot{H} &= -\frac{1}{2} (\rho + p).
\end{align}

With $H(t) = h \tanh(ht)$ and $\dot{H} = h^2 \, \mathrm{sech}^2(ht)$, we compute
\begin{align}
\rho(t) &= 3h^2 \tanh^2(ht), \\
\rho(t) + p(t) &= -2h^2 \, \mathrm{sech}^2(ht) < 0.\label{rpluspreg}
\end{align}
The NEC is violated at all times. Near the bounce, $\rho \to 0$ while $p \to -2h^2$, implying $w = p/\rho \to -\infty$. This divergence in $w$ is not a curvature singularity: scalar invariants remain finite,
\begin{align}
R &= 6 h^2\left(1 + \tanh^2(ht)\right), \\
K &= 12 h^4\left(\tanh^4(ht) + 1\right).
\end{align}

The geometry is smooth and geodesically complete, but the affine ANEC integral is strictly negative. Explicitly, Eq.~(\ref{anecint}) becomes
\[
I_{\rm ANEC} = -2 h^3 \int_{-\infty}^{\infty} \frac{\sech^2(h t)}{\cosh(h t)}\,dt,
\]
\[
I_{\rm ANEC} = -\pi h^2 .
\]

The same scale factor therefore describes two different FRW geometries.  Only
the closed \(k=+1\) realization is maximally symmetric de Sitter; the flat
\(k=0\) realization lacks positive spatial curvature and must be supported by
NEC-violating matter.  Thus the need for NEC violation is not a property of the
scale factor alone, but of the spatial geometry in which it is realized.
Positive curvature can support the same nonsingular bounce without violating
the NEC; in the de Sitter example the ANEC is saturated while the SEC is
violated, as expected for accelerated expansion.

Due to the required NEC violation the flat embedding cannot be driven by a real minimally coupled scalar with standard kinetic term, since
\begin{equation}\label{scalarmat}
\rho=\tfrac12\dot\phi^{2}+V(\phi),\qquad
\rho+p=\dot\phi^{2}\ge 0,
\end{equation}
which contradicts Eq.~\ref{rpluspreg}.  
Hence, no real canonical field can drive this bounce.\footnote{Higher-derivative models can violate NEC, but a simple \(\tfrac12\dot\phi^{2}\) cannot.}

However, because the NEC is violated for all finite \(t\), the flat cosh bounce can be
sourced by a wrong-sign scalar field with potential \(V(\phi)\), commonly
referred to as \emph{phantom matter} \cite{Caldwell:1999ew,Carroll:2003st,Caldwell:2003vq,Vikman:2004dc}.

\paragraph{Phantom action}
With signature $(-,+,+,+)$, a ``wrong-sign''
(phantom) scalar is defined by
\begin{equation}
S[\phi,g]=\int d^4x\,\sqrt{-g}\,\left(\tfrac{1}{2}\,g^{\mu\nu}\partial_\mu\phi\,\partial_\nu\phi
- V(\phi)\right).
\end{equation}
Its stress--energy tensor and equation of motion are
\begin{align}
T_{\mu\nu}
&= -\,\partial_\mu\phi\,\partial_\nu\phi
+ g_{\mu\nu}\!\left(\tfrac12\,(\nabla\phi)^2 - V(\phi)\right),\\[2pt]
\Box\phi \;+\; V_{,\phi} &= 0,
\end{align}
where $(\nabla\phi)^2 \equiv g^{\mu\nu}\partial_\mu\phi\,\partial_\nu\phi$
and $\Box\equiv\nabla_\mu\nabla^\mu$.
For a spatially homogeneous field $\phi=\phi(t)$ in FRW,
\begin{equation}
\rho=-\tfrac{1}{2}\dot\phi^{2}+V(\phi),\qquad
p=-\tfrac{1}{2}\dot\phi^{2}-V(\phi),\qquad
\ddot\phi+3H\dot\phi - V_{,\phi}=0,
\end{equation}
so that $\rho+p=-\dot\phi^2\le 0$ and the NEC is violated whenever $\dot\phi\neq0$.
\footnote{The wrong-sign kinetic term is ghostlike: in the minimally coupled
effective theory the Hamiltonian is unbounded below, signaling a vacuum
instability unless the phantom description is only an effective parametrization
or is completed by additional degrees of freedom outside the low-energy model~\cite{Ostrogradsky:1850fid,Cline:2003gs,Woodard:2006nt,Woodard:2015zca}.}

From the Friedmann equations for the flat \(\cosh\) bounce, we have:
\[
\rho + p = -2 h^2\,\mathrm{sech}^2(h t),
\quad\Rightarrow\quad
\dot{\phi}^2 = 2 h^2\,\mathrm{sech}^2(h t).
\]

Thus:
\[
\dot{\phi}(t) = \pm \sqrt{2}\,h\,\sech(h t),
\qquad
\phi(t) = 2\sqrt{2}\,\arctan\!\left[\tanh\!\left(\tfrac{h t}{2}\right)\right] + \phi_0,
\]
where the displayed solution takes the \(+\) branch and \( \phi_0 \) is an integration constant. Along the solution,
\begin{equation}\label{eq:phipi}
\phi\in\Big(\phi_0-\tfrac{\pi}{\sqrt{2}},\,\phi_0+\tfrac{\pi}{\sqrt{2}}\Big),
\end{equation}
and below \(V(\phi)\) is smooth and finite throughout.

From the energy density:
\[
\rho = -\tfrac{1}{2} \dot{\phi}^2 + V(\phi)
= 3 h^2 \tanh^2(h t),
\]
and using \( \dot{\phi}^2 = 2 h^2 \sech^2(h t) \), we solve for \( V(t) \):
\[
V(t) = \rho + \tfrac{1}{2} \dot{\phi}^2
= 3 h^2 \tanh^2(h t) + h^2 \sech^2(h t).
\]

Using
\[
y\equiv \tanh\!\left(\frac{ht}{2}\right)
=\tan\!\left(\frac{\phi-\phi_0}{2\sqrt{2}}\right),
\]
one finds
\[
\tanh(ht)=\frac{2y}{1+y^2},
\qquad
\sech^2(ht)=1-\left(\frac{2y}{1+y^2}\right)^2 .
\]
Substitution into \(V(t)\) gives
\[
V(\phi)
=
h^2\left[2-\cos\!\left(\sqrt{2}\,(\phi-\phi_0)\right)\right],
\]
on the branch \eqref{eq:phipi}.
This exact phantom-scalar model reproduces the flat \(\cosh\) bounce.

The resulting matter sector is genuinely exotic: the NEC and WEC fail for all finite $t$, the SEC is violated throughout the accelerated history, and the DEC fails because $|p|>\rho$ away from the asymptotic ends. Fig.~\ref{fig:cosh_energy_conditions} illustrates the energy-condition contrast between the flat and closed constructions.

The example provides an important geometric lesson.  From $\dot H=-(\rho+p)/2+k/a^2$, positive curvature can make $\dot H>0$ at a bounce even when $\rho+p\ge0$.  The following examples demonstrate this mechanism using ordinary canonical scalar matter.
\newpage
\begin{figure}[tbp]
\makebox[\linewidth][c]{\includegraphics[width=1\linewidth]{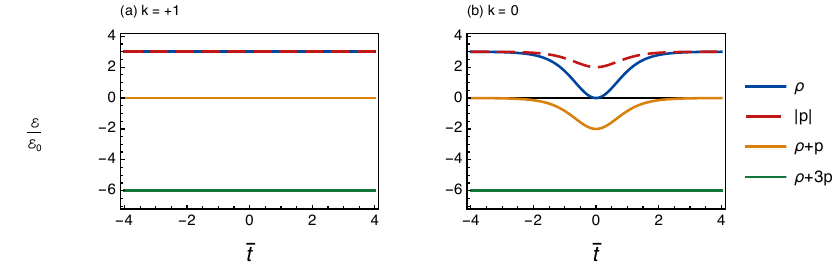}}
\caption{Energy conditions for the same \(\cosh\) scale factor,
\(a(t)=h^{-1}\cosh(ht)\), interpreted in closed and flat FRW geometries.
The plotted vertical quantity is \(\mathcal E/\mathcal E_0\), where
\(\mathcal E\in\{\rho,|p|,\rho+p,\rho+3p\}\) and \(\mathcal E_0=h^2\);
the horizontal variable is \(\bar t=ht\). Panel (a) shows the closed
\(k=+1\) de Sitter slicing, for which \(\rho+p=0\) and the NEC and ANEC
are saturated. Panel (b) shows the flat \(k=0\) realization, where the
same nonsingular bounce requires phantom matter with \(\rho+p<0\). The
parameter choice is \(h=1\).}
\label{fig:cosh_energy_conditions}
\end{figure}

\subsection{Quadratic bounce}
As a simple nontrivial model exhibiting the same principle, consider the geodesically complete quadratic inflationary-bounce scale factor
\begin{equation}\label{sfquad}
    a(t) = a_0 \left(\left(\frac{t}{\alpha}\right)^2 + c \right) \,,
\end{equation}
for constants $\{a_0, \, \alpha ,\, c\}>0$.\footnote{This model represents a subclass of the polynomial bounce models discussed in \cite{Easson:2024fzn}.} We now set the dimensionful parameter $\alpha =1$. This model is accelerated for all time $t \in (-\infty, \infty )$, since 
$\ddot a = 2 a_0>0$, corresponding to accelerated contraction for $t<0$ and accelerated expansion for $t>0$.

The model is null geodesically complete because the affine length
\[
\lambda(t)=\int^t a(\zeta)\,d\zeta
\]
diverges separately as \(t\to +\infty\) and \(t\to-\infty\).
All curvature invariants remain finite for both the flat and closed embeddings. The Hubble quantities are
\[
H=\frac{2t}{t^2+c},\qquad
\dot H=\frac{2(c-t^2)}{(t^2+c)^2}.
\]
For the flat embedding, for example,
\[
R=\frac{12(c+3t^2)}{(t^2+c)^2},\qquad
K=\frac{48\left(c^2+2ct^2+5t^4\right)}{(t^2+c)^4},
\]
which remain finite for all $t$ and vanish as $|t|\to\infty$. In the closed embedding the same conclusion holds, with additional finite curvature contributions proportional to powers of $1/a^2$.

For FRW radial null geodesics, the affine completeness criterion is independent
of \(k\) and depends only on the scale factor through \(\int a(t)\,dt\). In the
explicit examples considered here, timelike completeness also follows because
the cosmic-time range is the full real line and \(a(t)\) remains bounded away
from zero.
If the same scale factor Eq.~\ref{sfquad} is sourced in a flat $k=0$ FRW model, the required matter violates all of the standard energy conditions discussed in Sec.~\ref{ecsec}: NEC, WEC, SEC, and DEC (see \fig{ecquadplusc}).\footnote{The energy density and pressure are given by the mixed components of the Einstein tensor: $\rho = - G^t{}_t$ and $p = G^i{}_i$, respectively. }

While the NEC is violated for only a relatively short amount of time near the bounce at $t=0$, the SEC is violated for all time since the model is accelerated for all time.

\begin{itemize}
  \item \textbf{NEC}: $\rho+p=\dfrac{4(t^2-c)}{(t^2+c)^2}$, violated for $|t|<\sqrt{c}$, satisfied for $|t|>\sqrt{c}$.
  \item \textbf{WEC}: violated near the bounce since $\rho\ge0$ but $\rho+p<0$ for $|t|<\sqrt{c}$.
  \item \textbf{SEC}: $\rho+3p=-\dfrac{12}{t^2+c}<0$ for all $t$ (violated always).
  \item \textbf{DEC}: $|p|\le\rho$ holds iff $t^2\ge c$; thus violated near the bounce and satisfied at late times.
\end{itemize}

The ANEC integral in the flat (\(k=0\)) case is
\[
I_{\rm ANEC}
=
\int_{-\infty}^{\infty}\frac{\rho+p}{a}\,dt
=
\frac{4}{a_0}\int_{-\infty}^{\infty}
\frac{t^2-c}{(t^2+c)^3}\,dt
=
-\frac{\pi}{a_0c^{3/2}}<0.
\]
For the parameters used in Fig.~\ref{ecquadplusc}, \(a_0=1\) and
\(c=\tfrac12\), this gives \(I_{\rm ANEC}=-2\sqrt2\,\pi\).

\begin{figure}[tbp]
\makebox[\linewidth][c]{\includegraphics[width=1\linewidth]{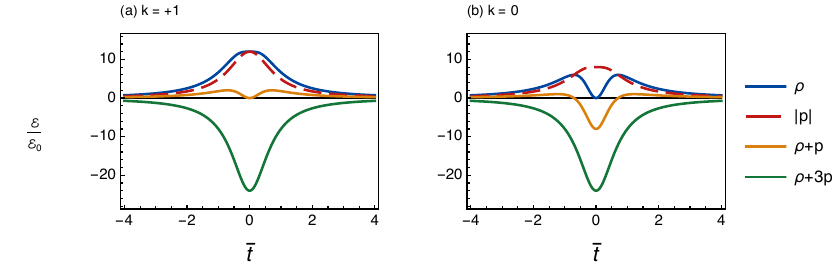}}
\caption{Energy conditions for the quadratic bounce scale factor
\(a(t)=a_0[(t/\alpha)^2+c]\). The plotted vertical quantity is
\(\mathcal E/\mathcal E_0\), where
\(\mathcal E\in\{\rho,|p|,\rho+p,\rho+3p\}\) and
\(\mathcal E_0=\alpha^{-2}\); the horizontal variable is
\(\bar t=t/\alpha\). Panel (a) shows the closed \(k=+1\) realization,
where the bounce is supported by spatial curvature and canonical scalar
matter: \(\rho+p\ge0\) and \(\rho\ge0\), while \(\rho+3p<0\) during the
accelerated phase. Panel (b) shows the flat \(k=0\) realization of the
same scale factor, where \(\rho+p\) becomes negative near the bounce and
the affine ANEC integral is negative. The parameter choice is
\(a_0=1\), \(c=1/2\), and \(\alpha=1\).}
\label{ecquadplusc}
\end{figure}
Thus, the same smooth scale factor has a negative affine ANEC integral in the flat realization, while in the closed realization the curvature term allows ordinary canonical scalar support. We now give the exact closed-branch matter reconstruction. 
For the construction we assume the same scale factor Eq.~(\ref{sfquad}), again
with \(a_0>0\) and \(c>0\). The scalar field remains real only in the
parameter range derived below.

For the closed \(k=+1\) reconstruction, the scalar kinetic term follows from
the Einstein equations as
\[
\dot\phi^2=-2\left(\dot H-\frac{1}{a^2}\right).
\]
Using the Hubble quantities given above and
\[
\frac{1}{a^2}=\frac{1}{a_0^2(t^2+c)^2},
\]
we obtain
\[
\dot\phi^2
=
\frac{2\left[a_0^{-2}-2(c-t^2)\right]}{(t^2+c)^2}.
\]
Thus \(\dot\phi^2\ge0\) for all \(t\) if and only if
\[
c\le \frac{1}{2a_0^2}.
\]

\subsection{Scalar Field Solution}\label{scalarmat-k1}
Define
\[
s^2 \equiv \frac{1}{2a_0^2}-c \ge 0.
\]
For \(s>0\), also define
\[
r \equiv \frac{c}{s^2}
=\frac{2a_0^2 c}{1-2a_0^2 c},
\qquad
u \equiv \operatorname{arcsinh}\!\Bigl(\frac{t}{s}\Bigr).
\]
From
\[
\dot\phi^2=\frac{2\bigl[\tfrac{1}{a_0^2}-2(c-t^2)\bigr]}{(t^2+c)^2}
= \frac{4\,(t^2+s^2)}{(t^2+c)^2},
\]
we obtain the exact primitive (with $\phi(0)=0$):
\[
\phi(t)=
\begin{cases}
\displaystyle
2u+\frac{2(1-r)}{\sqrt{r(1-r)}}\,
\arctan\!\Biggl(\sqrt{\frac{1-r}{r}}\ \tanh u\Biggr),
& 0<r<1,\\[1.2ex]
\displaystyle
2\,\operatorname{arcsinh}\!\Bigl(\tfrac{t}{s}\Bigr),
& r=1,\\[1.2ex]
\displaystyle
2u+\frac{2(1-r)}{\sqrt{r(r-1)}}\,
\operatorname{arctanh}\!\Biggl(\sqrt{\frac{r-1}{r}}\ \tanh u\Biggr),
& r>1,\\[1.2ex]
\displaystyle
\ln\!\bigl(2a_0^2 t^2+1\bigr),
& s=0\ \bigl(c=\tfrac{1}{2a_0^2}\bigr).
\end{cases}
\]

The Friedmann equation can be rearranged to give
\[
V(t)=\frac{10t^2+\frac{2}{a_0^2}+2c}{(t^2+c)^2}.
\]
In the boundary case $c=\tfrac{1}{2a_0^2}$ one can eliminate $t$ to obtain
\[
\phi(t)=\ln\!\bigl(2a_0^2 t^2+1\bigr),
\qquad
V(\phi)=4a_0^2\bigl(5e^{-\phi}-2e^{-2\phi}\bigr).
\]
For $0<c<\tfrac{1}{2a_0^2}$ an explicit parametric form is
\[
t(u)=s\sinh u,\qquad
V(u)=\frac{10s^2\sinh^2 u+\frac{2}{a_0^2}+2c}{\bigl(s^2\sinh^2 u+c\bigr)^2},
\quad \phi=\phi(u)\ \text{as above}.
\]

These results provide an explicit analytic example of how a non-singular, eternal cosmology can be supported by a scalar field potential. 
In the boundary case \(s=0\), the displayed solution corresponds to the smooth
sign-changing root \(\dot\phi=2t/(t^2+c)\); it has the same
\(\dot\phi^2\) and gives a single-valued potential because \(V(t)\) depends
only on \(t^2\). For the parameter values considered earlier in the flat case, $a_0=1$, $c=\tfrac{1}{2}$, one finds
\[
\phi(t)=\ln\left(2t^2+1\right) \,, \quad V(\phi)=20e^{-\phi}-8e^{-2\phi}\,; \quad a(t) = t^2 + \frac{1}{2}.
\]

\subsection{Scalar matter energy conditions with $k=1$}
The closed scalar-matter construction above has the following properties: 

\begin{itemize}
    \item \textbf{Null Energy Condition (NEC):} \(\rho + p = \dot\phi^2 \ge 0\). This holds for all $t$ provided $\phi$ is real.
    
    \item \textbf{Weak Energy Condition (WEC):} \(\rho \ge 0\) and \(\rho + p = \dot\phi^2 \ge 0\). Both conditions are satisfied throughout the evolution.
    
    \item \textbf{Strong Energy Condition (SEC):} \(\rho + 3p = -6\, \ddot a / a < 0\) since $\ddot a = 2a_0 > 0$. Thus, SEC is violated at all times.
    
    \item \textbf{Averaged Null Energy Condition (ANEC):} The ANEC integral is
    \[
    \int_{-\infty}^{+\infty} \frac{\rho(t) + p(t)}{a(t)}\, dt = \int_{-\infty}^{+\infty} \frac{\dot\phi^2}{a(t)}\, dt.
    \]
    From the exact expression, $\dot\phi^2\sim 4/t^2$ while $a(t)\sim a_0t^2$ at large $|t|$.  Hence the affine ANEC integrand behaves as $\dot\phi^2/a\sim 4/(a_0t^4)$ and is integrable. The integral is finite and strictly positive: 
    \[
I_{\mathrm{ANEC}} \;=\; \frac{\pi \,\bigl(3 - 4 a_0^{2} c \bigr)}{4\,a_0^{3}\,c^{5/2}} \,.
\]
\end{itemize}

Thus the scalar field satisfies the NEC, WEC, and ANEC, while violating the SEC as required by acceleration (see \fig{ecquadplusc}). The flat realization of the same scale factor instead requires NEC violation. We next isolate the general FRW mechanism underlying this distinction: in
Einstein gravity, a nondegenerate bounce necessarily violates the SEC, but
positive spatial curvature can allow the NEC to remain satisfied.

For our purposes we may define this cosmological bounce in a generalized FRW spacetime by Def. 1.
\begin{definition}[Bouncing spacetime]\label{def:bounce}
Let $(\mathcal M,g)$ be an $n$-dimensional, time-oriented Lorentzian manifold.
We say that $(\mathcal M,g)$ \emph{undergoes a bounce} if there exist

\begin{itemize}
  \item a connected open set $U\subset\mathcal M$,
  \item a connected $(n-1)$-dimensional Riemannian manifold $(V,h)$,
  \item an open interval $(b,c)\subset\mathbb R$ with coordinate $t$,
  \item a diffeomorphism (Lorentzian isometry)
    $\Phi:U\to(b,c)\times V$
\end{itemize}

such that
\[
  \Phi^{*}\bigl(-dt^{2}+a(t)^{2}h\bigr)=g\big|_{U},
  \qquad a\in C^{2}\bigl((b,c),(0,\infty)\bigr),
\]
and there exists \(t_{\rm b}\in(b,c)\) with
\[
\dot a(t_{\rm b})=0,\qquad \ddot a(t_{\rm b})>0.
\]
Any such \(t_{\rm b}\) is called a \textbf{bounce time}. These conditions
describe a nondegenerate bounce: in a sufficiently small neighborhood of
\(t_{\rm b}\), \(\dot a\) changes sign from negative contraction to positive
expansion.
\end{definition}

\begin{thm}[Bounce implies SEC violation in FRW]\label{thmsec}
Let \((M,g)\) be a Friedmann--Robertson--Walker spacetime with curvature
\(k=0,\pm1\),
\[
ds^{2}=-dt^{2}+a(t)^{2}d\Sigma_k^{2},
\qquad
a\in C^{2},\quad a(t)>0,
\]
sourced by a perfect fluid,
\[
T_{\mu\nu}=(\rho+p)u_{\mu}u_{\nu}+p\,g_{\mu\nu},
\]
satisfying the Einstein equations. If the scale factor experiences a
nondegenerate bounce at \(t_{\mathrm b}\),
\[
\dot a(t_{\mathrm b})=0,\qquad \ddot a(t_{\mathrm b})>0,
\]
then the strong energy condition is violated at \(t_{\mathrm b}\):
\[
\rho(t_{\mathrm b})+3p(t_{\mathrm b})<0.
\]
\end{thm}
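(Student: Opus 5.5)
The plan is to use the Raychaudhuri (acceleration) equation for FRW, which follows from the two Friedmann equations already written in Sec.~\ref{sec:examples}. Subtracting the first equation from the second eliminates the curvature term $k/a^2$. Using $\dot H=\ddot a/a-H^2$, this gives
\begin{equation*}
\frac{\ddot a}{a}=-\frac{1}{6}\,(\rho+3p),
\end{equation*}
valid for every $k\in\{0,\pm1\}$ (with $8\pi G=1$). Concretely, from $H^2+k/a^2=\rho/3$ and $\dot H-k/a^2=-(\rho+p)/2$ we add to get $\dot H+H^2=\rho/3-(\rho+p)/2=-(\rho+3p)/6$, and we note $\dot H+H^2=\ddot a/a$. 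The key observation is that $k$ drops out entirely, so the argument is uniform in the curvature branch. This contrasts with the NEC combination $\dot H-k/a^2$, where $k$ survives.

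The first step is to justify that these equations hold pointwise at $t_{\rm b}$. Since $a\in C^2$ and $a>0$, both $H$ and $\dot H$ are defined and continuous, and the Einstein equations for the perfect fluid with comoving $u^\mu$ reduce to the two Friedmann equations: the $tt$ component and the trace of the $ij$ components. The second step is to evaluate at the bounce. There $\dot a(t_{\rm b})=0$ and $\ddot a(t_{\rm b})>0$, and $a(t_{\rm b})>0$, so
\begin{equation*}
\rho(t_{\rm b})+3p(t_{\rm b})=-\frac{6\,\ddot a(t_{\rm b})}{a(t_{\rm b})}<0 .
\end{equation*}
This is precisely a violation of the second SEC inequality of Sec.~\ref{ecsec}.

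I do not expect a genuine obstacle here. The only point needing care is the regularity bookkeeping. $C^2$ regularity of $a$ is exactly what makes $\ddot a$, and hence $\rho+3p$ via the Einstein equations, well defined at $t_{\rm b}$. The condition $a>0$ is what allows division by $a$. For completeness I will also remark why this does not obstruct closed bounces with ordinary matter. At $t_{\rm b}$ we have $H=0$, so $\rho+p=2k/a^2-2\ddot a/a$. This quantity can be nonnegative when $k=+1$, but is forced negative when $k\le 0$. That anticipates the curvature-dependent NEC/ANEC analysis later in the paper.
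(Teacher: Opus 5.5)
Your proposal is correct and follows the paper's proof: you add the two Friedmann equations to get the curvature-independent relation $\ddot a/a=-(\rho+3p)/6$, then evaluate it at $t_{\rm b}$ using $a(t_{\rm b})>0$ and $\ddot a(t_{\rm b})>0$. One small wording slip: the $k/a^2$ term cancels when you \emph{add} the two equations, as your explicit computation does, not when you subtract them as your first sentence says.
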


\begin{proof}
For an FRW universe with curvature \(k=0,\pm1\), the Einstein equations are
\[
H^{2}+\frac{k}{a^{2}}=\frac{\rho}{3},
\qquad
\dot H-\frac{k}{a^{2}}=-\frac{\rho+p}{2},
\]
where \(H=\dot a/a\). Combining these equations with
\[
\frac{\ddot a}{a}=\dot H+H^{2}
\]
gives the curvature-independent acceleration equation
\[
\frac{\ddot a}{a}=-\frac{1}{6}(\rho+3p).
\]
At the bounce, \(a(t_{\mathrm b})>0\) and \(\ddot a(t_{\mathrm b})>0\), so
\[
0<
\frac{\ddot a}{a}(t_{\mathrm b})
=
-\frac{1}{6}\left[\rho(t_{\mathrm b})+3p(t_{\mathrm b})\right].
\]
Therefore
\[
\rho(t_{\mathrm b})+3p(t_{\mathrm b})<0.
\]
Since the SEC for a perfect fluid requires \(\rho+3p\ge0\), the SEC is
violated at the bounce.
Any cosmological-constant contribution is included in the effective
perfect-fluid variables \(\rho\) and \(p\).
\end{proof}

\begin{remark}
Using the curvature–independent acceleration equation
\[
\frac{\ddot a}{a} \;=\; -\frac{1}{6}\,(\rho+3p),
\]
and noting that $a>0$ and $a\in C^{2}$ imply $\ddot a/a$ is continuous, the condition
$\ddot a(t_{\mathrm b})>0$ ensures the existence of $\varepsilon>0$ such that
$\ddot a/a>0$ for all $t\in(t_{\mathrm b}-\varepsilon,\,t_{\mathrm b}+\varepsilon)$. Hence
\[
\rho+3p \;=\; -6\,\frac{\ddot a}{a} \;<\; 0
\quad\text{throughout a neighborhood of } t_{\mathrm b}.
\]
Thus the SEC violation is not instantaneous but persists over an open interval around the bounce.~\footnote{It is well appreciated that an FRW bounce in Einstein gravity entails
SEC violation; the statement is often presented heuristically. Here it is
formalized as a theorem under the mild assumptions \(a\in C^2\) and
perfect-fluid matter.}
\end{remark}

This SEC violation should not be confused with phantom or otherwise exotic
matter. A minimally coupled canonical scalar has
\(\rho+p=\dot\phi^2\ge0\) and
\(\rho+3p=2\dot\phi^2-2V(\phi)\), so it violates the SEC whenever
potential energy dominates kinetic energy while still satisfying the NEC.
The distinction emphasized here is that the closed family can remain NEC- and
ANEC-respecting, whereas the corresponding flat nonsingular realizations require
NEC violation.

The quadratic reconstruction provides a useful analytic seed for more
complete closed-FRW early-universe models. In particular, the bounce-side
potential in the curvature-bounce inflation construction of
Ref.~\cite{Easson:2026ret} is anchored to this exact canonical
scalar reconstruction and then smoothly interpolated onto a Starobinsky-like
slow-roll plateau. The role of the quadratic solution here is therefore not
only illustrative: it gives a closed-form local model of a curvature-supported,
NEC-respecting bounce that can be embedded into a complete bounce-plus-inflation
history.

The quadratic and cosh bounces are not isolated examples. They are summarized in Table~\ref{tab:examples-summary}, which emphasizes that the ANEC behavior depends not only on the scale factor but also on the spatial curvature sector in which it is realized. 

We next consider a third scale factor, originally obtained as the isotropic nonsingular solution of the Born--Infeld--type mimetic construction with a limiting-curvature constraint \cite{Markov1982,markov1987,STAROBINSKY198099,Frolov1989,Frolov1990,Mukhanov1992,Brandenberger1993,Chamseddine:2016uef,Chamseddine2017ktu,Frolov:2021afd}.
\begin{table}[t]
\centering
\begin{tabular}{cccc}
\hline\hline
Scale factor & Curvature & Required support & Affine ANEC \\
\hline
$h^{-1}\cosh(ht)$ & $k=0$ & phantom scalar & $-\pi h^2$ \\
$h^{-1}\cosh(ht)$ & $k=+1$ & $\Lambda>0$ & $0$ \\
$a_0(t^2+c)$ & $k=0$ & NEC-violating effective fluid & $<0$ \\
$a_0(t^2+c)$ & $k=+1$ & canonical scalar & $>0$ \\
\hline\hline
\end{tabular}
\caption{
Representative complete FRW examples illustrating that the affine ANEC is
not determined by the scale factor alone, but by the curvature sector in which
that scale factor is realized.
}
\label{tab:examples-summary}
\end{table}
We present a novel realization of this scale factor as a closed-FRW bounce with 
\begin{equation}\label{mukhanovsf}
    a(t) = a_0 (bt^2 + c)^{1/3} \,.
\end{equation}
This coincides with the nonsingular solution obtained in the limiting-curvature (Born-Infeld mimetic) theory of \cite{Chamseddine:2016uef}, after the parameter identification 
\[
b = 3\epsilon_m, \quad c = 1,
\]
translating their maximal curvature $\epsilon_m$ to our bounce width. The Born--Infeld scale therefore produces, at the homogeneous background level,
a curvature-like contribution with the same time dependence as
\[
\frac{1}{a^2}=a_0^{-2}(bt^2+c)^{-2/3}.
\]
Thus \(a_0^{-2}\) plays the role of an effective curvature scale, not a
curvature index.

This comparison separates the homogeneous background from its microscopic interpretation.  In the closed-FRW realization the $+1/a^2$ term is geometric; in the BI--mimetic realization the limiting-curvature correction acts as an effective source.  The two mechanisms can therefore generate the same background scale factor while differing in perturbations, ultraviolet completion, and stress-tensor interpretation.  In Sec.~\ref{sec:phantom-mimic} we return to the related question of how much phantom-like behavior can be induced in a flat-model reconstruction by ignoring small positive curvature.

\subsection{Closed-FRW bounce from ordinary scalar matter}

We now show that the bouncing model \eqref{mukhanovsf} can be supported by an ordinary minimally coupled scalar field in a closed ($k=+1$) FRW universe, and that the matter sector remains real and regular for all time.

\subsubsection{Scalar--field reconstruction for the cubic--root bounce}

Introduce
\[
X(t)\equiv bt^{2}+c \; (>0).
\]
For the scale factor $a(t)=a_{0}X^{1/3}$ one finds
\begin{equation}
H(t)=\frac{\dot a}{a}= \frac{2bt}{3X},\qquad
\dot H(t)=\frac{2b(c-bt^{2})}{3X^{2}},\qquad
\frac{1}{a^{2}}=\frac{1}{a_{0}^{2}X^{2/3}}.
\label{eq:H-quantities}
\end{equation}
%
The Raychaudhuri equation
\[
\dot\phi^{2}=-2\bigl(\dot H-1/a^{2}\bigr)
\]
yields
\begin{equation}
\dot\phi^{2}(t)=
\frac{2}{3a_{0}^{2}}\Bigl[\,3X^{-2/3}+2a_{0}^{2}b\,(bt^{2}-c)\,X^{-2}\Bigr],\quad X=bt^2+c.
\label{eq:cubic-phi-dot}
\end{equation}
For $a_0,b,c>0$, $\dot\phi^{2}(t)\ge 0$ for all $t$ if and only if
\[
3\,c^{1/3}\;\ge\;2\,a_0^2 b.
\]
To see the admissibility condition, multiply the bracket in
\eqref{eq:cubic-phi-dot} by the positive factor \(X^2\).  The sign of
\(\dot\phi^2\) is controlled by
\[
F(t)=3X^{4/3}+2a_0^2 b\,(bt^2-c),
\qquad X=bt^2+c .
\]
Writing \(y=bt^2\ge0\), this becomes
\[
F(y)=3(y+c)^{4/3}+2a_0^2 b\,(y-c),
\]
and
\[
\frac{dF}{dy}=4(y+c)^{1/3}+2a_0^2 b>0 .
\]
Thus \(F\) is minimized at \(y=0\), i.e. at the bounce, and the necessary and
sufficient condition for \(\dot\phi^2\ge0\) for all \(t\) is
\[
F(0)=c\left(3c^{1/3}-2a_0^2b\right)\ge0,
\]
or equivalently
\[
3c^{1/3}\ge 2a_0^2b .
\]

For the strict range \(3c^{1/3}>2a_0^2b\), choosing the positive root
\(\dot\phi=+\sqrt{\dot\phi^2}\) gives a smooth monotonic scalar field and the
quadrature
\begin{equation}
\phi(t)=\phi_0+
\int_0^t
\left\{
\frac{2}{3a_0^2}
\left[3X(\tau)^{-2/3}+2a_0^2 b\bigl(b\tau^2-c\bigr)X(\tau)^{-2}\right]
\right\}^{1/2}d\tau,
\qquad X(\tau)=b\tau^2+c .
\label{eq:cubic-phi-quadrature}
\end{equation}
At the boundary \(3c^{1/3}=2a_0^2b\), the positive square-root branch is not
smooth through the bounce; a smooth scalar branch is obtained by taking the
sign-changing root. The figure below uses the strict admissible choice
\(a_0=1\), \(b=1\), \(c=1\).  Under the parameter condition above, the
reconstruction is real for all \(t\). In contrast to the quadratic example,
the inverse \(t(\phi)\) is not needed; the scalar reconstruction is explicit in
parametric form.

The Friedmann equation
\(
H^{2}+1/a^{2}=\tfrac13\bigl(\tfrac12\dot\phi^{2}+V\bigr)
\)
together with \eqref{eq:H-quantities} gives
\begin{equation}
V(t)=\frac{2}{3a_{0}^{2}}\;
\frac{a_{0}^{2}b\,X^{2/3}+3\,(bt^{2}+c)}{X^{5/3}},\qquad X=bt^2+c.
\label{eq:cubic-V-of-t}
\end{equation}

The Friedmann equation determines the potential along the same solution through \eqref{eq:cubic-V-of-t}.  Thus the pair
\begin{equation}
\bigl(\phi(t),V(t)\bigr),\qquad -\infty<t<\infty,
\label{eq:cubic-parametric-potential}
\end{equation}
with $\phi(t)$ given by \eqref{eq:cubic-phi-quadrature}, is a regular parametric representation of the canonical scalar potential on the branch realized by the cosmology.  On any interval where $\phi(t)$ is monotonic this may be inverted to give $V(\phi)$ in the usual way; if the scalar has a turning point, the parametric representation instead describes the corresponding branch structure.

Eqs.~\eqref{eq:cubic-phi-dot}--\eqref{eq:cubic-parametric-potential} give a consistent scalar--field realization of the cubic--root bounce \eqref{mukhanovsf}. The bounce is driven by the geometric curvature term $+1/a^2$ without invoking phantom matter.

The reconstructed scalar profiles are not independent of the scalar equation
of motion. Once \(a(t)\), \(\phi(t)\), and \(V(t)\) satisfy the two Friedmann
equations, stress-tensor conservation follows from the Bianchi identity. For a
homogeneous canonical scalar this conservation law is
\[
\dot\phi\left(\ddot\phi+3H\dot\phi+V_{,\phi}\right)=0 .
\]
Thus, on intervals where \(\dot\phi\neq0\), the scalar equation of motion
follows automatically. At regular turning points it extends by continuity on
the reconstructed branch.

\subsubsection{Energy conditions.}
For admissible parameters \eqref{eq:cubic-phi-dot} gives \(\dot\phi^{2}\ge0\), so
\(\rho+p\ge0\) (NEC) and \(\rho\ge0\) (WEC) hold, with possible saturation at isolated points in the boundary case.  
The SEC is violated in a neighbourhood of the bounce (\(\ddot a > 0\)) in accordance with our Thm.~\ref{thmsec}.  For all admissible parameters the ANEC integrand is nonnegative pointwise,
\(
(\rho+p)/a=\dot\phi^2/a\ge0,
\)
and the asymptotic falloff makes the integral finite: the leading positive curvature contribution falls as \(|t|^{-2}\), while the expansion contribution falls faster. Thus the model provides an explicit, geodesically complete, ANEC-respecting bounce without exotic matter.

The corresponding energy-condition diagnostics for the closed and flat embeddings are shown in Fig.~\ref{fig:cubic_energy_conditions}.
\begin{figure}[tbp]
\makebox[\linewidth][c]{\includegraphics[width=1\linewidth]{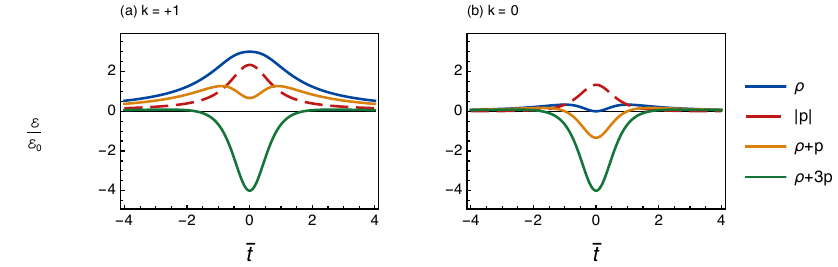}}
\caption{Energy conditions for the cubic-root bounce scale factor
\(a(t)=a_0(bt^2+c)^{1/3}\). The plotted vertical quantity is
\(\mathcal E/\mathcal E_0\), where
\(\mathcal E\in\{\rho,|p|,\rho+p,\rho+3p\}\) and
\(\mathcal E_0=b\); the horizontal variable is
\(\bar t=\sqrt b\,t\). Panel (a) shows the closed \(k=+1\)
realization, where the curvature term allows a nonsingular bounce with
\(\rho+p\ge0\), while the strong energy condition is violated near the
bounce. Panel (b) shows the flat \(k=0\) realization of the same scale
factor, which requires NEC-violating effective matter around the bounce.
The parameter choice is \(a_0=1\), \(b=1\), and \(c=1\), which lies in
the strict admissible range for the canonical scalar reconstruction.}
\label{fig:cubic_energy_conditions}
\end{figure}

\section{Null completeness and the affine ANEC identity}
\label{sec:geodesic-completeness}

The conditions for geodesic completeness of FRW and generalized FRW
spacetimes were derived in \cite{Lesnefsky:2022fen}, with extensions and
applications in \cite{Easson:2024uxe,Easson:2024fzn}. For the present paper
the essential null statement is simple: along any nontrivial radial null
geodesic,
\[
\lambda(t)-\lambda(t_0)=\int_{t_0}^{t}a(\tau)\,d\tau ,
\]
up to an overall positive affine rescaling. Hence affine completeness to the
future and past is equivalent, respectively, to
\[
\int_{t_0}^{+\infty}a(t)\,dt=\infty,
\qquad
\int_{-\infty}^{t_0}a(t)\,dt=\infty .
\]
Thus the examples above are null complete because their scale factors give
infinite affine length toward both time ends. Since in these examples \(a(t)\)
is bounded away from zero and the cosmic-time range is the entire real line,
timelike geodesics are complete as well.

The affinely parameterized FRW ANEC used in this paper is,
\begin{equation}
I_{\rm ANEC}=\int_{-\infty}^{+\infty}\frac{\rho+p}{a}\,dt .
\end{equation}
Finite-interval identities are kept intact until the final improper limit is
taken. This avoids spurious cancellations between divergent boundary and bulk
terms.

For later use it is helpful to show the curvature-dependent finite-interval
identity in a unified form. The FRW Einstein equations give
\[
\rho+p=-2\left(\dot H-\frac{k}{a^2}\right),
\]
where \(H=\dot a/a\). Therefore, on any finite interval \([T_-,T_+]\),
\begin{equation}
\label{eq:master-anec-identity}
\int_{T_-}^{T_+}\frac{\rho+p}{a}\,dt
=
-2\left[\frac{H}{a}\right]_{T_-}^{T_+}
-2\int_{T_-}^{T_+}\frac{H^2}{a}\,dt
+2k\int_{T_-}^{T_+}\frac{dt}{a^3}.
\end{equation}
Thus, apart from the boundary term, the flat branch \(k=0\) has only the
negative expansion bulk contribution, the open branch \(k=-1\) has an
additional negative curvature contribution, and the closed branch \(k=+1\) has
a positive curvature contribution. This is the basic sign structure behind the
classification below.

The examples already display the pattern: the closed $k=+1$ realizations can satisfy the ANEC, while the corresponding flat $k=0$ realizations fail it.  We now prove the general flat obstruction.

\section{ANEC and Null Completeness in Flat FRW Spacetimes}
\label{sec:k0-anec}

We establish the relationship between the affine ANEC and null geodesic completeness in spatially flat FRW spacetimes. As before we use \(8\pi G=1\) and assume
\[
a\in C^2(\mathbb R),\qquad a(t)>0,
\]
together with the Einstein equations
\begin{equation}
H^2=\frac{\rho}{3},\qquad
\dot H=-\frac{\rho+p}{2}.
\end{equation}

\subsection{Preliminaries}

\begin{lemma}[Null completeness criterion]
\label{lem:null-complete}
For
\[
ds^2=-dt^2+a(t)^2\,d\vec x^{\,2},
\]
any nontrivial radial null geodesic is complete to the future, respectively
to the past, if and only if
\begin{equation}
\int^{+\infty} a(t)\,dt=\infty,
\qquad
\text{respectively}
\qquad
\int_{-\infty} a(t)\,dt=\infty.
\end{equation}
\end{lemma}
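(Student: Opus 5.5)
The plan is to obtain the radial null geodesics from the conserved momentum associated with spatial translation invariance and then read off the relation between cosmic time and affine parameter. Since the statement concerns a geodesic and its own affine length, we may choose coordinates so that the geodesic moves along the \(x\)-axis. The metric is independent of \(x\), so \(\partial_x\) is a Killing vector. Along any geodesic with tangent \(k^\mu=dx^\mu/d\lambda\), the quantity \(P\equiv g_{\mu\nu}(\partial_x)^\mu k^\nu=a^2\,dx/d\lambda\) is therefore constant. The word ``nontrivial'' means \(P\neq0\); if \(P=0\), the null condition would force \(k^\mu=0\). The null condition \(-(dt/d\lambda)^2+a^2(dx/d\lambda)^2=0\) then gives \((dt/d\lambda)^2=P^2/a^2\). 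For the future-directed branch this yields \(dt/d\lambda=|P|/a(t)>0\), which agrees with Appendix~\ref{appaffine} and with the \(dt/d\lambda=1/a\) normalization used in Eq.~\eqref{anecint}.

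The next step is to verify that this first-order reduction really describes the geodesic. I will check that the \(t\)-component of the geodesic equation, \(\ddot t+a\dot a\,\dot x^2=0\), follows from differentiating \(dt/d\lambda=|P|/a\). This is a one-line computation, or one can appeal directly to the standard first-integral argument. Because \(dt/d\lambda>0\) everywhere, \(t\) is a strictly increasing \(C^1\) function of \(\lambda\) and can serve as the parameter along the curve. Inverting gives
\[
\lambda(t)-\lambda(t_0)=\frac{1}{|P|}\int_{t_0}^{t}a(\tau)\,d\tau .
\]
The constant \(1/|P|\) is the overall positive affine rescaling mentioned in Sec.~\ref{sec:geodesic-completeness}.

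Finally, I will translate completeness into a statement about this integral. The spacetime is defined for all \(t\in\mathbb R\), and \(a\) is continuous and positive, so the geodesic can be extended as long as \(t\) stays finite. The maximal future extension is therefore \(t\to+\infty\), which corresponds to \(\lambda\to\lambda(t_0)+|P|^{-1}\int_{t_0}^{\infty}a\,dt\). The geodesic is future complete exactly when this supremum is \(+\infty\). The past direction is handled the same way with \(t\to-\infty\).

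The only real subtlety is ruling out the possibility that the geodesic fails to extend at some finite \(t\). This cannot happen, because the ODE \(dt/d\lambda=|P|/a(t)\) has a continuous, locally bounded right-hand side on all of \(\mathbb R\), while \(x(\lambda)=\int P/a^2\,d\lambda\) stays finite on any finite \(t\)-range. I do not expect a genuine obstacle here; the argument is a direct computation.
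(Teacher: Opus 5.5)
Your proposal is correct and follows essentially the paper's route. The paper (Sec.~\ref{sec:geodesic-completeness} and Appendix~\ref{appaffine}) integrates the \(t\)-component of the geodesic equation, after imposing the null condition, to get \(dt/d\lambda=E/a(t)\); this is the same first integral you obtain from the Killing momentum \(P=a^2\,dx/d\lambda\), and both arguments then read completeness off the divergence of \(\int a\,dt\). Your explicit check that the geodesic cannot terminate at finite \(t\) is a small piece of rigor that the paper leaves implicit.
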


\begin{lemma}[Finite-interval ANEC identity in flat FRW]
\label{lem:finite-identity}
Let \(k=0\). For any finite interval \([T_-,T_+]\),
\begin{equation}
\label{eq:flat-finite-identity}
\int_{T_-}^{T_+}\frac{\rho+p}{a}\,dt
= -\,2\left[\frac{H}{a}\right]_{T_-}^{T_+}
-2\int_{T_-}^{T_+}\frac{H^2}{a}\,dt .
\end{equation}
\end{lemma}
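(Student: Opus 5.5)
The plan is to use the second flat Friedmann equation to rewrite the integrand as a total derivative plus a sign-definite remainder, and then integrate over \([T_-,T_+]\). With \(k=0\) the Einstein equations give \(\rho+p=-2\dot H\). Since \(a\in C^2\) and \(a>0\), the function \(H=\dot a/a\) is \(C^1\) and \(1/a\) is \(C^2\). Hence every integrand below is continuous on the compact interval, so all integrals are ordinary Riemann integrals and no improper limits arise.

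The key step is the product rule applied to \(H/a\):
\[
\frac{d}{dt}\left(\frac{H}{a}\right)=\frac{\dot H}{a}-\frac{H\dot a}{a^2}=\frac{\dot H}{a}-\frac{H^2}{a},
\]
using \(\dot a/a=H\). Rearranging gives
\[
\frac{\dot H}{a}=\frac{d}{dt}\left(\frac{H}{a}\right)+\frac{H^2}{a}.
\]
Multiplying by \(-2\) yields the pointwise identity
\[
\frac{\rho+p}{a}=-2\frac{d}{dt}\left(\frac{H}{a}\right)-2\frac{H^2}{a}.
\]

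Integrating this identity over \([T_-,T_+]\) and applying the fundamental theorem of calculus to the first term gives Eq.~\eqref{eq:flat-finite-identity} directly. The result is also the \(k=0\) specialization of the master identity \eqref{eq:master-anec-identity}, whose curvature term \(2k\int a^{-3}dt\) simply drops out.

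There is no real obstacle. The only point needing care is the regularity required for the fundamental theorem of calculus, namely that \(H/a\) is \(C^1\), which follows from \(a\in C^2\) and \(a>0\). It is also worth emphasizing that the identity is kept at finite \(T_\pm\), so the boundary term is not discarded before the later limits are taken.
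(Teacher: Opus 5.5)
Your proposal is correct and follows the paper's proof essentially verbatim: you substitute \(\rho+p=-2\dot H\), use \(\frac{d}{dt}(H/a)=\dot H/a-H^2/a\) to split off a total derivative, and integrate over \([T_-,T_+]\). Your explicit check that \(a\in C^2\) with \(a>0\) makes \(H/a\) a \(C^1\) function, so the fundamental theorem of calculus applies on the compact interval, is a small piece of rigor the paper leaves implicit.
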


\begin{proof}
Using the flat-FRW Einstein equation \(\rho+p=-2\dot H\),
\[
\int_{T_-}^{T_+}\frac{\rho+p}{a}\,dt
=
-2\int_{T_-}^{T_+}\frac{\dot H}{a}\,dt .
\]
Moreover,
\[
\frac{d}{dt}\left(\frac{H}{a}\right)
=
\frac{\dot H}{a}-\frac{H^2}{a},
\qquad\text{so}\qquad
\frac{\dot H}{a}
=
\frac{d}{dt}\left(\frac{H}{a}\right)
+\frac{H^2}{a}.
\]
Substitution gives Eq.~\eqref{eq:flat-finite-identity}.
\end{proof}

\noindent
We define the affine ANEC as the two-sided improper limit
\begin{equation}
I_{\mathrm{ANEC}}
:=
\lim_{T_-\to-\infty,\;T_+\to+\infty}
\left\{
-2\left[\frac{H}{a}\right]_{T_-}^{T_+}
-2\int_{T_-}^{T_+}\frac{H^2}{a}\,dt
\right\}
=
\lim_{T_-\to-\infty,\;T_+\to+\infty}
\int_{T_-}^{T_+}\frac{\rho+p}{a}\,dt,
\end{equation}
with the limit taken in the extended real line
\(\overline{\mathbb R}:=\mathbb R\cup\{-\infty,+\infty\}\). We do \emph{not}
separate the limits of the boundary and bulk terms unless each term has a
well-defined improper limit individually.

\begin{cor}[Global form when boundary limits exist]
\label{cor:global-form}
If the one-sided limits
\[
\lim_{t\to\pm\infty}\frac{H}{a}
\]
exist and are finite, and if
\[
P:=\int_{-\infty}^{+\infty}\frac{H^2}{a}\,dt
\]
exists as an improper integral in \([0,+\infty]\), then
\begin{equation}
I_{\mathrm{ANEC}}
=
-2B-2P,
\qquad
B:=
\left[\frac{H}{a}\right]_{-\infty}^{+\infty}
=
\lim_{t\to+\infty}\frac{H}{a}
-
\lim_{t\to-\infty}\frac{H}{a}.
\end{equation}
Here \(P\ge0\).
\end{cor}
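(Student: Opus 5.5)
The plan is to start from the finite-interval identity of Lemma~\ref{lem:finite-identity} and pass to the two-sided limit term by term. The hypotheses are chosen to make exactly this legitimate. For each finite \(T_-<T_+\), Eq.~\eqref{eq:flat-finite-identity} gives
\[
\int_{T_-}^{T_+}\frac{\rho+p}{a}\,dt
=
-2\left(\frac{H}{a}(T_+)-\frac{H}{a}(T_-)\right)
-2\int_{T_-}^{T_+}\frac{H^2}{a}\,dt .
\]
Since \(a>0\), the integrand \(H^2/a\) is nonnegative and continuous. This gives \(P\ge0\) at once. It also shows that the partial integrals \(P(T_-,T_+):=\int_{T_-}^{T_+}H^2/a\,dt\) are monotone: nondecreasing as \(T_+\) increases and as \(T_-\) decreases. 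Hence the two-sided limit exists in \([0,+\infty]\) and equals \(P\), regardless of how \(T_\pm\to\pm\infty\) jointly. The existence of \(P\) is therefore automatic; the hypothesis merely names it.

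Next I will handle the boundary term. By assumption \(\lim_{t\to\pm\infty}H/a=:L_\pm\) are finite real numbers. Hence \(\frac{H}{a}(T_+)-\frac{H}{a}(T_-)\to L_+-L_-=B\in\mathbb R\) as \(T_-\to-\infty\) and \(T_+\to+\infty\) independently.

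Finally I will combine the two limits. A finite limit \(-2B\) plus a limit \(-2P\in[-\infty,0]\) has a well-defined sum in \(\overline{\mathbb R}\), since no \(\infty-\infty\) form can arise. By the definition of \(I_{\rm ANEC}\) as the two-sided limit of the finite-interval expression, this gives \(I_{\rm ANEC}=-2B-2P\). This includes the case \(P=+\infty\), where \(I_{\rm ANEC}=-\infty\).

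The only delicate point is this last step: justifying the separation of limits that the definition explicitly warns against. The finiteness of \(B\) is exactly what rules out a divergent boundary term cancelling a divergent bulk term. So I would state clearly that the additivity of limits in \(\overline{\mathbb R}\) is used with one summand finite.
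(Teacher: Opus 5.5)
Your argument is correct and follows the paper's route. The paper gives no separate proof: the corollary follows directly from Lemma~\ref{lem:finite-identity} and the definition of \(I_{\mathrm{ANEC}}\) as a two-sided limit, with the limits separated precisely because the boundary term has a finite limit and the bulk term has a monotone limit in \([0,+\infty]\). Your observation that the existence of \(P\) is automatic from \(H^2/a\ge0\) is also correct; it makes that hypothesis redundant.
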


\subsection{Main results}

\begin{thm}[ANEC forces incompleteness under a mild boundary hypothesis]
\label{thm:k0-i}
Let \((M,g)\) be a spatially flat FRW spacetime with
\(a\in C^2(\mathbb R)\), \(a>0\), and non-static. Assume the Einstein equations
hold and that the one-sided limits
\[
\lim_{t\to\pm\infty}\frac{H}{a}
\]
exist and are finite. If the affine ANEC integral exists as an extended
improper integral and satisfies
\[
I_{\mathrm{ANEC}}\ge0,
\]
allowing \(I_{\mathrm{ANEC}}=+\infty\), then \((M,g)\) is null-incomplete in
at least one time direction.
\end{thm}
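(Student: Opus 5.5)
We argue by contraposition. Assume, in addition to the hypotheses of Theorem~\ref{thm:k0-i}, that $(M,g)$ is null complete in both time directions. By Lemma~\ref{lem:null-complete} this means $\int^{+\infty}a\,dt=\int_{-\infty}a\,dt=\infty$. The goal is to show $I_{\rm ANEC}<0$, which contradicts $I_{\rm ANEC}\ge0$.

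The tool is the finite-interval identity of Lemma~\ref{lem:finite-identity}, passed to the limit via Corollary~\ref{cor:global-form}. Its hypotheses hold here. The limits $L_\pm:=\lim_{t\to\pm\infty}H/a$ exist and are finite by assumption. The integrand $H^2/a$ is continuous and nonnegative, so $P=\int_{-\infty}^{+\infty}H^2/a\,dt$ exists in $[0,+\infty]$. Hence $I_{\rm ANEC}=-2B-2P$ with $B=L_+-L_-$ finite.

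\textbf{Step 1 (a nonzero boundary limit forces $P=\infty$).} Rewrite the bulk integrand as
\[
\frac{H^2}{a}=\Bigl(\frac{H}{a}\Bigr)^{2}a .
\]
Suppose $L_+\neq0$. Then for all sufficiently large $t$ we have $(H/a)^2\ge L_+^2/2$, so $H^2/a\ge \tfrac12 L_+^2\,a$. Future null completeness gives $\int^{+\infty}a\,dt=\infty$, and therefore $P=+\infty$.

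The same argument at the past end handles $L_-\neq0$, using past completeness. In either case $B$ is finite and $P=+\infty$, so $I_{\rm ANEC}=-\infty<0$.

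\textbf{Step 2 (both boundary limits vanish).} If $L_+=L_-=0$, then $B=0$ and $I_{\rm ANEC}=-2P$. Since the model is non-static, $\dot a$ is not identically zero, so $H$ is nonzero at some $t_0$. By continuity of $H$ (recall $a\in C^2$, $a>0$), $H\neq0$ on a neighborhood of $t_0$. The integrand $H^2/a$ is therefore strictly positive on an open interval, so $P>0$ (possibly $+\infty$). Hence $I_{\rm ANEC}<0$.

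Steps 1 and 2 together show that a complete, non-static model satisfying the hypotheses has $I_{\rm ANEC}<0$ (possibly $-\infty$). This contradicts $I_{\rm ANEC}\ge0$, so the spacetime must be null incomplete in at least one time direction.

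The only delicate point I expect is Step 1. There, completeness must be converted into divergence of the bulk term, rather than a naive claim that $H/a\to0$. That claim is false in general: for example, $a\sim 1/(|\ell|t)$ is still future complete (logarithmically) yet has $H/a\to-|\ell|$. The factorization $H^2/a=(H/a)^2a$ avoids any case analysis on the signs of $L_\pm$. The finiteness of the boundary limits, assumed in the theorem, is what keeps $-2B-2P$ from producing an $\infty-\infty$ ambiguity.
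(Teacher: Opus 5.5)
Your proposal is correct and follows essentially the same route as the paper. Both arguments assume two-sided completeness, use $H^2/a=(H/a)^2a$ to turn a nonzero endpoint limit into $P=+\infty$, and otherwise get $B=0$ and $P>0$ from non-staticity, so that $I_{\rm ANEC}=-2B-2P<0$. Your case split on $L_\pm$ rather than on finiteness of $P$ is slightly cleaner, because it explicitly covers the case $L_\pm=0$ with $P=+\infty$, which the paper's split leaves implicit.
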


\begin{proof}
Suppose, for contradiction, that the spacetime is null-complete in both time
directions. By Lemma~\ref{lem:null-complete},
\[
\int^{+\infty}a(t)\,dt=\infty,
\qquad
\int_{-\infty}a(t)\,dt=\infty.
\]

Let
\[
L_\pm:=\lim_{t\to\pm\infty}\frac{H}{a}.
\]
If either endpoint limit is nonzero, say \(L_\pm\neq0\), then on that end
there is a sufficiently large \(|t|\) region on which
\[
\left|\frac{H}{a}\right|\ge\frac{|L_\pm|}{2}.
\]
Hence
\[
\frac{H^2}{a}\ge \frac{L_\pm^2}{4}\,a
\]
on that end. Since null completeness implies \(\int a(t)\,dt=\infty\) there,
we obtain
\[
P:=\int_{-\infty}^{+\infty}\frac{H^2}{a}\,dt=+\infty.
\]
In this case the finite boundary contribution cannot compensate the divergent
negative bulk term, and therefore
\[
I_{\mathrm{ANEC}}=-\infty.
\]

It remains to consider the case \(P<\infty\). Then the preceding argument
shows that both endpoint limits must vanish:
\[
L_+=L_-=0.
\]
Thus
\[
B=\left[\frac{H}{a}\right]_{-\infty}^{+\infty}=0.
\]
Since the spacetime is non-static, \(H\not\equiv0\), and therefore
\[
P=\int_{-\infty}^{+\infty}\frac{H^2}{a}\,dt>0.
\]
Using Corollary~\ref{cor:global-form}, we find
\[
I_{\mathrm{ANEC}}=-2B-2P=-2P<0.
\]

Thus, if the spacetime were null-complete in both time directions, the affine
ANEC integral would be strictly negative, or equal to \(-\infty\). This
contradicts \(I_{\mathrm{ANEC}}\ge0\). Therefore the spacetime must be
null-incomplete in at least one time direction.
\end{proof}

\begin{remark}
The hypothesis that the one-sided limits of \(H/a\) exist and are finite is a
mild boundary regularity condition ensuring that the boundary term in the
finite-interval identity has a well-defined improper limit. It does not require
bounded curvature and is automatically satisfied in many FRW models of interest,
including the explicit examples considered above.
\end{remark}

\begin{cor}[Regular nonsingular flat-FRW obstruction]
\label{cor:k0-regular-obstruction}
Let \((M,g)\) be a spatially flat FRW spacetime with
\(a\in C^2(\mathbb R)\), \(a>0\), and non-static scale factor. Assume all
scalar polynomial curvature invariants are bounded on \(\mathbb R\), and
assume the affine boundary term is regular in the sense that the one-sided
limits
\[
\lim_{t\to\pm\infty}\frac{H}{a}
\]
exist and are finite. If \((M,g)\) is null-complete in both time directions,
then
\[
I_{\mathrm{ANEC}}<0 .
\]
Consequently, any spacetime in this regular nonsingular flat-FRW class whose
affine ANEC exists as an extended improper integral and satisfies
\(I_{\mathrm{ANEC}}\ge0\) is null-incomplete in at least one time direction.
\end{cor}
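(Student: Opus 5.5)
The corollary reduces to Theorem~\ref{thm:k0-i}: its hypotheses are a superset of the theorem's, since the bounded-curvature assumption is added but never needed for the sign argument. I will therefore rerun the theorem's proof in direct rather than contrapositive form.

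Assume null completeness in both directions. Lemma~\ref{lem:null-complete} then gives $\int^{+\infty}a\,dt=\int_{-\infty}a\,dt=\infty$. Let $L_\pm=\lim_{t\to\pm\infty}H/a$; these are finite by hypothesis. I split into two cases.

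\emph{Case 1: some $L_\pm\neq0$.} Then $H^2/a\ge (L_\pm^2/4)\,a$ on a neighbourhood of that end. Completeness gives $P=+\infty$. In the finite-interval identity of Lemma~\ref{lem:finite-identity}, the boundary term $-2[H/a]_{T_-}^{T_+}$ stays bounded as $T_\pm\to\pm\infty$, because it converges to $-2(L_+-L_-)$. The bulk term $-2\int_{T_-}^{T_+}H^2/a\,dt$ is monotone in both endpoints and tends to $-\infty$. So the two-sided limit exists in $\overline{\mathbb R}$ and equals $-\infty$.

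\emph{Case 2: $L_+=L_-=0$.} Then $B=0$. The bulk integral $P=\int H^2/a\,dt$ exists in $[0,\infty]$ because its integrand is nonnegative. It is strictly positive: $H$ is continuous and not identically zero (non-static), so $H^2/a>0$ on an open interval. If $P<\infty$, Corollary~\ref{cor:global-form} gives $I_{\rm ANEC}=-2P<0$. If $P=\infty$, the same monotonicity argument as in Case 1 gives $-\infty$.

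In every case $I_{\rm ANEC}<0$, and it also exists as an extended improper integral. The ``consequently'' clause is then just the contrapositive: $I_{\rm ANEC}\ge0$ contradicts double completeness.

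The only point needing care is the legitimacy of the two-sided limit when $P=\infty$. Corollary~\ref{cor:global-form} formally requires finite $P$ only for the arithmetic $-2B-2P$. I would handle this directly from the finite identity: a convergent, hence bounded, boundary term plus a bulk term diverging monotonically to $-\infty$. This is exactly why the limits are kept together on finite intervals.

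I expect no genuine obstacle. The curvature-boundedness hypothesis is not used in the argument; it only places the result in the regular nonsingular class. I would note this explicitly in a remark.
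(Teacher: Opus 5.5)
Your proof is correct and follows the paper's own argument: the paper also proves this as the null-complete form of Theorem~\ref{thm:k0-i}. Its case split is on whether $\int H^2/a\,dt$ diverges, which gives $-\infty$ against a finite boundary term, or is finite, which forces $H/a\to0$ at both ends and gives $I_{\rm ANEC}=-2P<0$; like you, it makes no real use of the bounded-curvature hypothesis. Your explicit check that the two-sided limit exists in $\overline{\mathbb R}$ when $P=\infty$ (bounded boundary term plus a bulk term diverging monotonically) usefully spells out the paper's one-line ``the finite boundary term cannot compensate'' step.
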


\begin{proof}
This is the null-complete contrapositive of
Theorem~\ref{thm:k0-i}, specialized to the bounded-curvature regular-end
class. If the bulk integral diverges, the finite boundary term cannot
compensate the negative bulk contribution and \(I_{\rm ANEC}=-\infty\). If the
bulk integral is finite, the argument in Theorem~\ref{thm:k0-i} forces
\(H/a\to0\) at both ends, so the boundary term vanishes and
\[
I_{\rm ANEC}
=
-2\int_{-\infty}^{+\infty}\frac{H^2}{a}\,dt<0
\]
for every non-static solution.
\end{proof}

\begin{remark}[No converse]
The theorem and corollary above are obstructions, not completeness criteria.
We do not claim that \(I_{\mathrm{ANEC}}<0\) implies null completeness.
Negative affine ANEC can occur in models with finite-affine endpoints; the
result used here is only the one-way implication that, within the stated
regular flat-FRW class, two-sided null completeness of a non-static model
forces strict ANEC violation.
\end{remark}

\begin{remark}[On boundary terms and truncation]
\label{rmk:truncation}
The finite-interval identity \eqref{eq:flat-finite-identity} is the primary
object; the ANEC is defined as its two-sided improper limit. This avoids
spurious ``\(\infty-\infty\)'' cancellations between the boundary and bulk
terms. When the one-sided limits of \(H/a\) exist, Corollary~\ref{cor:global-form}
gives the global decomposition
\[
I_{\rm ANEC}
=
-2\left[\frac{H}{a}\right]_{-\infty}^{+\infty}
-2\int_{-\infty}^{+\infty}\frac{H^2}{a}\,dt .
\]
For null-complete flat models with regular ends, either the bulk integral
diverges, giving \(I_{\rm ANEC}=-\infty\), or it is finite, in which case null
completeness forces \(H/a\to0\) at both ends and hence
\[
I_{\rm ANEC}
=
-2\int_{-\infty}^{+\infty}\frac{H^2}{a}\,dt<0
\]
for every non-static solution in the finite-bulk branch. Thus every non-static
null-complete flat model in this regular class violates the affine ANEC.
\end{remark}

\begin{remark}[Oscillatory bounded models do not evade the obstruction]
Assume \(a\in C^2(\mathbb R)\), with
\[
0<a_{\min}\le a(t)\le a_{\max}<\infty,
\qquad
|H|,|\dot H|\le C .
\]
Then \(H/a\) is bounded. If
\[
\int_{-\infty}^{+\infty}\frac{H^2}{a}\,dt=+\infty,
\]
the bounded boundary term in \eqref{eq:flat-finite-identity} cannot compensate
the negative bulk term, so \(I_{\rm ANEC}=-\infty\).

If instead
\[
\int_{-\infty}^{+\infty}\frac{H^2}{a}\,dt<\infty,
\]
then, since \(a(t)\le a_{\max}\),
\[
H^2
=
a\,\frac{H^2}{a}
\le
a_{\max}\frac{H^2}{a}.
\]
Therefore
\[
\int_{-\infty}^{+\infty}H(t)^2\,dt<\infty .
\]
Since \(\dot H\) is bounded, \(H\) is uniformly continuous. A uniformly
continuous function whose square is integrable on \(\mathbb R\) must vanish at
both ends, so
\[
H(t)\to0
\qquad\text{as}\qquad
t\to\pm\infty .
\]
The boundary term then vanishes and
\[
I_{\rm ANEC}
=
-2\int_{-\infty}^{+\infty}\frac{H^2}{a}\,dt<0
\]
unless \(H\equiv0\). Thus bounded non-static oscillatory flat-FRW models do not
evade the obstruction.

For \(k=-1\), the finite-interval identity contains the additional negative
term \(-2\int a^{-3}dt\). If \(a\) is bounded above on the full real line, then
\(a^{-3}\ge a_{\max}^{-3}\), so this term diverges. Hence bounded nonsingular
open-FRW oscillatory models have \(I_{\rm ANEC}=-\infty\).
\end{remark}

\begin{cor}[Periodic flat/open models]
\label{cor:periodic-flat-open}
Let \(a(t)\) be a positive, nonconstant, \(C^{2}\), periodic scale factor.
Then the corresponding flat \(k=0\) FRW spacetime is nonsingular,
radially null geodesically complete, and satisfies
\[
I_{\rm ANEC}=-\infty .
\]
The corresponding open \(k=-1\) FRW spacetime is likewise nonsingular,
radially null geodesically complete, and satisfies
\[
I_{\rm ANEC}=-\infty .
\]
Thus non-static cyclic flat/open FRW cosmologies do not evade the ANEC
obstruction.
\end{cor}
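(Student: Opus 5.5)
The plan is to reduce everything to the bounds already used in the oscillatory-models remark and then exploit periodicity to make the bulk term diverge. Let \(T>0\) be a period of \(a\). Since \(a\) is continuous, positive and periodic, it attains a minimum and maximum on \([0,T]\), so \(0<a_{\min}\le a\le a_{\max}<\infty\) on \(\mathbb R\). Because \(a\in C^2\) and periodic, \(\dot a,\ddot a\) are bounded. Hence \(H=\dot a/a\), \(\dot H\), and \(1/a^2\) are bounded, periodic and continuous.

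\emph{Regularity and completeness.} FRW scalar polynomial curvature invariants are polynomials in \(\ddot a/a\), \(H^2\) and \(k/a^2\), all bounded here, so the spacetime is nonsingular for \(k=0,-1\). For null completeness I invoke Lemma~\ref{lem:null-complete}, whose affine criterion is \(k\)-independent as recalled in Sec.~\ref{sec:geodesic-completeness}: \(\int_{t_0}^{\pm\infty}a\,dt\) diverges at both ends because \(a\ge a_{\min}>0\).

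\emph{Divergence of the ANEC.} I start from the master identity \eqref{eq:master-anec-identity}. The boundary term \(-2[H/a]_{T_-}^{T_+}\) is bounded uniformly in \(T_\pm\) by \(4\sup|H|/a_{\min}\). For the bulk term in the flat case, nonconstancy of \(a\) gives \(H\not\equiv0\), so
\[
c_0:=\int_0^T\frac{H^2}{a}\,dt>0 .
\]
Periodicity then yields \(\int_{T_-}^{T_+}H^2/a\,dt\ge c_0(N-1)\), where \(N\) is the number of full periods contained in \([T_-,T_+]\). This bound tends to \(+\infty\) as \(T_\pm\to\pm\infty\) jointly, irrespective of how the two limits are taken. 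The finite-interval integral is therefore at most a bounded quantity minus \(2c_0(N-1)\), so the two-sided limit exists in \(\overline{\mathbb R}\) and equals \(-\infty\). For \(k=-1\) the extra term \(-2\int_{T_-}^{T_+}a^{-3}\,dt\le -2(T_+-T_-)a_{\max}^{-3}\) is also nonpositive and divergent, so the same conclusion holds, even more directly.

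\emph{Main obstacle.} The only delicate point is making sure the improper limit is genuinely \(-\infty\) rather than undefined. The concern is that \(\rho+p\) changes sign within each cycle, so one cannot simply appeal to monotone convergence of the integrand. The way around it is to work only with the finite-interval identity: a uniformly bounded boundary term plus a bulk term bounded above by a quantity that diverges to \(-\infty\) along every exhausting family of intervals. This sidesteps any \(\infty-\infty\) issue, in keeping with Remark~\ref{rmk:truncation}.
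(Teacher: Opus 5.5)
Your proof is correct and takes the same approach as the paper. Periodicity gives uniform bounds on \(a\), \(\dot a\), \(\ddot a\), and these give both nonsingularity and null completeness. The boundary term in the finite-interval identity is uniformly bounded. The strictly positive per-period integral of \(H^2/a\), together with the divergent \(\int a^{-3}\,dt\) when \(k=-1\), then drives \(I_{\rm ANEC}\) to \(-\infty\); your count of full periods inside \([T_-,T_+]\) simply makes explicit the paper's ``positive average over one period'' step for arbitrary two-sided truncations.
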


\begin{proof}
Since \(a(t)>0\) is continuous and periodic, there are constants
\(0<a_{\min}\le a(t)\le a_{\max}<\infty\). Hence
\[
\int^\infty a(t)\,dt=\infty,
\qquad
\int_{-\infty} a(t)\,dt=\infty,
\]
so radial null geodesics are complete in both time directions. Moreover,
\(a,\dot a,\ddot a\) are bounded and \(a\) is bounded away from zero, so
the FRW curvature invariants are bounded.

For \(k=0\), the finite-interval identity gives
\[
I(T_-,T_+)
=
-2\left[\frac{H}{a}\right]_{T_-}^{T_+}
-2\int_{T_-}^{T_+}\frac{H^2}{a}\,dt .
\]
The boundary term is bounded, since \(H/a\) is periodic. Since \(a(t)\)
is nonconstant, \(H\not\equiv0\), and therefore \(H^2/a\) has strictly
positive average over one period. Thus
\[
\int_{T_-}^{T_+}\frac{H^2}{a}\,dt\to+\infty
\]
as \(T_+-T_-\to\infty\). Therefore \(I_{\rm ANEC}=-\infty\).

For \(k=-1\), the finite-interval identity is
\[
I(T_-,T_+)
=
-2\left[\frac{H}{a}\right]_{T_-}^{T_+}
-2\int_{T_-}^{T_+}\frac{H^2}{a}\,dt
-2\int_{T_-}^{T_+}\frac{dt}{a^3}.
\]
The last integral also diverges to \(+\infty\) because \(a(t)\) is bounded
above. Hence \(I_{\rm ANEC}=-\infty\) in the open case as well.
\end{proof}

This bounded-periodic case complements cyclic models with a net increase of the scale factor from cycle to cycle. There the cycle-averaged expansion can remain positive, so a BGV-type argument already implies past incompleteness~\cite{Kinney:2023urn,Kinney:2026kvw}. The corollary shows that even bounded cyclic flat/open models, which evade
that mechanism, still fail the affine ANEC.  The only marginal exception in
the flat branch is the trivial static Minkowski case.

\begin{remark}[Closed cyclic models]
The cyclic obstruction above is special to the flat and open branches.  In the
closed branch, the finite-interval identity contains the positive curvature
term \(2\int a^{-3}dt\), so a bounded cyclic \(k=+1\) model is not ruled out
by the affine ANEC argument.  Its ANEC sign is determined by the competition
between the positive curvature contribution and the negative expansion
contribution; in particular, any ordinary-matter realization with
\(\rho+p\ge0\) satisfies the affine ANEC directly.
\end{remark}

\begin{cor}[Rigidity of the saturated flat case]
\label{cor:k0-saturation}
Let \((M,g)\) be a spatially flat FRW spacetime with
\(a\in C^2(\mathbb R)\), \(a>0\), bounded scalar polynomial curvature
invariants, and regular affine ends in the sense that the one-sided limits
\[
\lim_{t\to\pm\infty}\frac{H}{a}
\]
exist and are finite. Suppose \((M,g)\) is null-complete in both time
directions and that the affine ANEC exists as an extended improper integral.
If
\[
I_{\mathrm{ANEC}}=0,
\]
then \(H\equiv0\), so \(a\equiv\mathrm{const}\). Hence the spacetime is
Minkowski spacetime up to a constant rescaling of the spatial coordinates.
\end{cor}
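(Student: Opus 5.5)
The plan is to reuse the dichotomy already established in the proof of Theorem~\ref{thm:k0-i} and Corollary~\ref{cor:k0-regular-obstruction}, and to track the case $I_{\rm ANEC}=0$ through it. By Lemma~\ref{lem:null-complete}, two-sided null completeness gives $\int^{+\infty}a\,dt=\int_{-\infty}a\,dt=\infty$. Write $L_\pm=\lim_{t\to\pm\infty}H/a$; these limits exist and are finite by hypothesis.

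\emph{Step 1: exclude the divergent branch.} Suppose $L_+\neq0$, say. Then $|H/a|\ge|L_+|/2$ for all large $t$, so $H^2/a\ge (L_+^2/4)\,a$ there. Null completeness then forces $\int H^2/a\,dt=+\infty$. In the finite-interval identity of Lemma~\ref{lem:finite-identity}, the boundary term stays bounded as $T_\pm\to\pm\infty$ because the limits $L_\pm$ are finite, while the bulk term tends to $-\infty$. Hence $I_{\rm ANEC}=-\infty\neq0$, and the same holds if $L_-\neq0$. So saturation forces $L_+=L_-=0$.

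\emph{Step 2: the finite-bulk branch.} With $L_\pm=0$, the boundary term $-2[H/a]_{T_-}^{T_+}$ tends to zero. The bulk integral $P=\int_{-\infty}^{+\infty}H^2/a\,dt$ exists in $[0,+\infty]$ because its integrand is nonnegative. If $P=+\infty$, then $I_{\rm ANEC}=-\infty$ again. So $P<\infty$, and Corollary~\ref{cor:global-form} gives $I_{\rm ANEC}=-2P$. The hypothesis $I_{\rm ANEC}=0$ then yields $P=0$.

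\emph{Step 3: rigidity.} Since $H^2/a$ is continuous (because $a\in C^2$ and $a>0$) and nonnegative, $P=0$ implies $H^2/a\equiv0$, hence $H\equiv0$ and $\dot a\equiv0$. Thus $a\equiv a_*>0$ is constant. The metric becomes $-dt^2+a_*^2d\vec x^{\,2}$, and rescaling $\vec x\mapsto a_*\vec x$ gives the Minkowski metric. As a consistency check, the Einstein equations then give $\rho=p=0$.

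The only delicate point is making sure no ``$\infty-\infty$'' cancellation can produce $I_{\rm ANEC}=0$ in Step 1. This is handled by keeping the finite-interval identity intact and using finiteness of $L_\pm$ to bound the boundary term. The bounded-curvature hypothesis is not actually needed beyond what Theorem~\ref{thm:k0-i} already uses.
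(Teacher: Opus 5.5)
Your proposal is correct and follows essentially the same argument as the paper. The finite limits of $H/a$ keep the boundary term bounded, null completeness forces $L_\pm=0$ (otherwise the bulk integral diverges and $I_{\rm ANEC}=-\infty$), and then $I_{\rm ANEC}=-2P=0$ together with continuity of $H^2/a$ gives $H\equiv0$. The only difference is the order of the steps (you force $L_\pm=0$ first and then $P<\infty$, whereas the paper excludes $P=+\infty$ first and then deduces $L_\pm=0$), and your remark that bounded curvature is never used agrees with the paper's own closing summary of that section.
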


\begin{proof}
By the flat finite-interval identity and the regular-end assumption,
\[
I_{\mathrm{ANEC}}=-2B-2P,
\qquad
B=\left[\frac{H}{a}\right]_{-\infty}^{+\infty},
\qquad
P=\int_{-\infty}^{+\infty}\frac{H^2}{a}\,dt\in[0,+\infty].
\]
If \(P=+\infty\), then \(I_{\mathrm{ANEC}}=-\infty\), contradicting
\(I_{\mathrm{ANEC}}=0\). Thus \(P<\infty\). By the same null-completeness
argument used in Theorem~\ref{thm:k0-i}, finiteness of \(P\) forces
\[
\lim_{t\to+\infty}\frac{H}{a}
=
\lim_{t\to-\infty}\frac{H}{a}
=0,
\]
so \(B=0\). Therefore
\[
0=I_{\mathrm{ANEC}}=-2P,
\]
and hence \(P=0\). Since \(a>0\), the integrand \(H^2/a\) is continuous and
nonnegative, so \(P=0\) implies \(H=0\) everywhere. Thus \(\dot a/a=0\), and
therefore \(a\) is constant. The flat FRW metric is then
\[
ds^2=-dt^2+a_0^2\,d\vec x^{\,2},
\]
which is Minkowski spacetime after the constant spatial rescaling
\(\vec x\,{}'=a_0\vec x\).
\end{proof}

\begin{remark}[Closed static case]
The flat rigidity statement is specific to \(k=0\).  For \(k=+1\),
\(H=\dot H=0\) and \(a=a_0\) give \(\mathbb R\times S^3\), not
Minkowski spacetime: the Riemann tensor is nonzero.  The Einstein equations
require
\[
\rho_{\rm tot}=\frac{3}{a_0^2},\qquad
p_{\rm tot}=-\frac{1}{a_0^2},
\qquad
w_{\rm tot}=-\frac13 .
\]
This source may be represented, for example, by an effective frustrated
string-network component, or by a decomposition that includes vacuum energy.
In all such descriptions the total \(\rho_{\rm tot}\) and \(p_{\rm tot}\)
above are fixed.  The solution is static and closed, and therefore is not a
counterexample to the non-static flat/open obstruction.
\end{remark}
\newpage
\subsection{Examples (all $C^\infty$)}
\begin{enumerate}
\item \textbf{Complete, bounded curvature, \(I_{\mathrm{ANEC}}<0\).}
\(a(t)=(1+t^2)^\alpha\), \(\alpha>0\). Then \(H/a\to0\) at both ends and
\[
P=\int_{-\infty}^{+\infty}\frac{H^2}{a}\,dt>0,
\]
so \(I_{\mathrm{ANEC}}=-2P<0\). For \(\alpha=1\),
\(P=\pi/2\), hence \(I_{\mathrm{ANEC}}=-\pi\).
\item \textbf{Incomplete, bounded curvature, $I_{\mathrm{ANEC}}=0$.} Flat-slicing de Sitter: $a(t)=e^{H_0 t}$ with $H_0>0$. Here $\rho+p=0$ so $I_{\mathrm{ANEC}}=0$. On truncations, the boundary and bulk terms diverge but cancel exactly.
\item \textbf{Incomplete, bounded curvature, $I_{\mathrm{ANEC}}>0$.} $a(t)=e^{-\sqrt{t^2+1}}$. Both ends are null-incomplete; the boundary and bulk pieces diverge on truncations but their sum yields $I_{\mathrm{ANEC}}= +\infty$.
\end{enumerate}

\subsection{Summary}
For non-static flat FRW spacetimes with bounded curvature and regular affine endpoints in the sense specified above,
\[
\text{null completeness}\quad\Longrightarrow\quad I_{\mathrm{ANEC}}<0
\quad\text{(allowing } I_{\mathrm{ANEC}}=-\infty\text{)}.
\]
Equivalently, within this same regular class,
\[
I_{\mathrm{ANEC}}\ge 0 \ \Longrightarrow\ \text{null incompleteness in at least one time direction}.
\]
Thus a non-static, nonsingular, flat FRW cosmology cannot be both null geodesically complete and ANEC-satisfying.  The static Minkowski solution remains the marginal case with $I_{\mathrm{ANEC}}=0$.

Without assuming bounded curvature, Theorem~\ref{thm:k0-i} gives the same one-way obstruction whenever the finite one-sided limits of \(H/a\) exist.

\section{ANEC and completeness in $k=+1$ FRW spacetimes}
\label{sec:anec-k1}

\begin{thm}[Geometric ANEC identity and divergence criteria]
\label{thm:ANEC-criterion}
Let \((\mathcal M,g)\) be a \(C^{2}\) FRW spacetime with closed spatial
sections,
\[
ds^{2}=-dt^{2}+a(t)^{2}\,d\Omega_{3}^{2},
\qquad
a\in C^2(\mathbb R),\qquad a(t)>0,
\]
satisfying the Einstein equations with some matter source. Along any radial
null geodesic, with affine parameter chosen so that
\[
k^{t}=\frac{dt}{d\lambda}=\frac{1}{a},
\]
the finite-interval averaged null energy is
\begin{equation}
\label{eq:anec-split}
I(T_{-},T_{+})
=
-\,2\,\left[\frac{H}{a}\right]_{T_{-}}^{T_{+}}
+
2\int_{T_{-}}^{T_{+}}
\left(\frac{1}{a^{3}}-\frac{H^{2}}{a}\right)\,dt .
\end{equation}
The affine ANEC is the two-sided improper limit
\[
I_{\mathrm{ANEC}}
:=
\lim_{T_-\to-\infty,\;T_+\to+\infty} I(T_-,T_+),
\]
whenever this limit exists in \(\mathbb R\cup\{\pm\infty\}\). We only
separate the limits of the individual terms if they converge separately.

\noindent\textit{Additional hypothesis for (a),(b).}
Assume, in addition, that all scalar polynomial curvature invariants are
bounded. Then \(H\) is bounded and \(a\) is bounded away from zero.

\begin{enumerate}[label=\textup{(\alph*)}]
\item If
\[
\int_{-\infty}^{\infty}\frac{dt}{a^{3}(t)}=+\infty,
\qquad
\int_{-\infty}^{\infty}\frac{H^{2}}{a}\,dt<\infty,
\]
then
\[
I_{\mathrm{ANEC}}=+\infty .
\]

\item If
\[
\int_{-\infty}^{\infty}\frac{dt}{a^{3}(t)}<\infty,
\qquad
\int_{-\infty}^{\infty}\frac{H^{2}}{a}\,dt<\infty,
\]
then \(a(t)\to\infty\) at both ends, \(H/a\to0\), and
\(I_{\mathrm{ANEC}}\) is finite.
\end{enumerate}

In particular, if
\[
\int_{-\infty}^{\infty}\frac{dt}{a(t)}<\infty,
\]
for example in exponential-tail models, then bounded curvature implies
\[
\int_{-\infty}^{\infty}\frac{H^{2}}{a}\,dt<\infty.
\]
Since bounded curvature also gives \(a\ge a_{\min}>0\), one has
\[
\frac{1}{a^{3}}\le \frac{1}{a_{\min}^{2}}\frac{1}{a},
\]
and therefore
\[
\int_{-\infty}^{\infty}\frac{dt}{a^{3}(t)}<\infty .
\]
The boundary term in \eqref{eq:anec-split} vanishes, and
\(I_{\mathrm{ANEC}}\) is finite.
\end{thm}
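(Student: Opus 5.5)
The identity \eqref{eq:anec-split} is the $k=+1$ case of the master identity \eqref{eq:master-anec-identity}. I would derive it exactly as in Lemma~\ref{lem:finite-identity}. The Einstein equations give $\rho+p=-2(\dot H-1/a^2)$. Dividing by $a$ and using $\frac{d}{dt}(H/a)=\dot H/a-H^2/a$, I write
\[
\frac{\rho+p}{a}=-2\frac{d}{dt}\!\left(\frac{H}{a}\right)-2\frac{H^2}{a}+\frac{2}{a^3}.
\]
Integrating over $[T_-,T_+]$ then gives \eqref{eq:anec-split}. The affine weighting $dt/d\lambda=1/a$ is taken from Appendix~\ref{appaffine}; since $T_{\mu\nu}k^\mu k^\nu=(\rho+p)(k^t)^2$, the measure is $(\rho+p)a^{-2}\,a\,dt=(\rho+p)\,dt/a$. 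The definition of $I_{\rm ANEC}$ as the joint improper limit is then just a convention.

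Next I would justify the bounded-curvature claims. The FRW invariants include $R=6(\dot H+2H^2+k/a^2)$ and the Kretschmann scalar $K=12[(\dot H+H^2)^2+(H^2+k/a^2)^2]$. Boundedness of $K$ bounds $|\dot H+H^2|$ and $H^2+1/a^2$. Since both terms of the latter are nonnegative for $k=+1$, this bounds $H^2$ and bounds $1/a^2$, so $a\ge a_{\min}>0$. This is the step I would treat most carefully, because it uses $k=+1$ positivity essentially.

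For (a), fix $T_\pm$. The bulk term is $2\int a^{-3}-2\int H^2/a$, with the first integral diverging to $+\infty$ and the second converging. It remains to control the boundary term. Here $|H/a|\le \sup|H|/a_{\min}$ is bounded, so the sum tends to $+\infty$ regardless of how $T_\pm\to\pm\infty$ jointly.

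For (b), the main point is to show $a\to\infty$ at both ends. I would use that $a^{-3}$ is integrable on each half-line and uniformly continuous there, since $\frac{d}{dt}a^{-3}=-3Ha^{-3}$ is bounded by $3\sup|H|/a_{\min}^3$. A nonnegative, uniformly continuous, integrable function tends to zero (the Barbalat-type argument already used in the oscillatory remark), so $a^{-3}\to0$ and hence $a\to\infty$. Then $|H/a|\le C/a\to0$, the boundary term vanishes, and both bulk integrals converge absolutely. So $I_{\rm ANEC}$ is finite, with $I_{\rm ANEC}=2\int(a^{-3}-H^2/a)\,dt$.

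For the final clause, assume $\int dt/a<\infty$. Then $H^2/a\le C^2/a$ gives integrability of $H^2/a$, and $a^{-3}\le a_{\min}^{-2}a^{-1}$ gives integrability of $a^{-3}$. We are therefore in case (b), and the boundary term vanishes with $I_{\rm ANEC}$ finite.

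The main obstacle is deriving $a\to\infty$ (equivalently $H/a\to0$) in (b) from mere integrability. The uniform-continuity argument requires the bounds on $H$ and $a_{\min}$ obtained from the curvature invariants, so that extraction must be done correctly first.
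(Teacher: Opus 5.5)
Your proposal is correct and follows the paper's proof essentially step by step: the same integration by parts with $\frac{d}{dt}(H/a)=\dot H/a-H^2/a$, the same Kretschmann bound on $H^2+1/a^2$ giving $|H|\le H_{\max}$ and $a\ge a_{\min}$, the same uniformly bounded boundary term for (a), and the final clause handled by reducing to (b). The only variation is in (b): you obtain $a\to\infty$ from a Barbalat-type argument on the Lipschitz function $a^{-3}$, whereas the paper's tail lemma argues by contradiction directly from $|d\log a/dt|\le H_{\max}$, keeping $a\le 2M$ on disjoint intervals of width $(\log 2)/H_{\max}$; this is the same mechanism, but the paper's version does not need $a_{\min}$.
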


The identity \eqref{eq:anec-split} and the convergence statements in
Theorem~\ref{thm:ANEC-criterion} are proved in Appendix~\ref{app:anec}.

\begin{remark}
If \(a(t)\) is bounded above on a semi-infinite interval, for example
\(a(t)\le M\) for all \(t\le t_\ast\), then
\[
\int a^{-3}\,dt=+\infty
\]
on that end. If, in addition, \(\int H^2/a\,dt<\infty\) and the boundary term
in \eqref{eq:anec-split} is controlled, the positive closed-curvature
contribution forces the corresponding closed-branch ANEC to diverge to
\(+\infty\). This situation is distinct from the exponential-tail condition
\(\int a^{-1}dt<\infty\), which cannot hold on an end where \(a\) is bounded
above. Conversely, \(\int a^{-3}dt\) can also diverge when \(a(t)\to\infty\)
but grows too slowly, for example \(a(t)\sim |t|^{p}\) with \(p\le 1/3\).
\end{remark}

\section{Open FRW spacetimes}
\label{sec:anec-kopen}

The open branch fails to provide a separate loophole; it strengthens the flat
obstruction.  Setting \(k=-1\) in the master identity
\eqref{eq:master-anec-identity} gives
\[
I(T_-,T_+)
=
-2\left[\frac{H}{a}\right]_{T_-}^{T_+}
-2\int_{T_-}^{T_+}\frac{H^2}{a}\,dt
-2\int_{T_-}^{T_+}\frac{dt}{a^3}.
\]
Thus, relative to the flat case, open curvature adds a further negative
bulk contribution.  Under the same regular-end hypotheses used in
Sec.~\ref{sec:k0-anec}, a non-static open FRW spacetime that is null
complete in both time directions has
\[
I_{\rm ANEC}<0,
\]
allowing \(I_{\rm ANEC}=-\infty\), whenever the extended improper limit
exists.  Equivalently, an ANEC-satisfying open FRW model in this regular
class is null incomplete in at least one time direction.  The full derivation
is given in Appendix~\ref{sec:anec-open}.

\section{Curvature as a Phantom Mimicker: Analytic Estimate}
\label{sec:phantom-mimic}

Recent cosmological analyses, including the DESI DR2 BAO results combined
with CMB and supernova data, suggest a mild preference in some data
combinations for evolving dark energy. In phenomenological parametrizations,
this can correspond to an effective equation of state crossing below the
phantom divide, \(w=-1\), over part of the observed redshift range
\cite{DESI:2025zgx,DESI:2025fii}. Such behavior is sometimes interpreted as evidence for
phantom dark energy
\cite{Caldwell:1999ew,Carroll:2003st,Caldwell:2003vq,Vikman:2004dc}.
Since genuine phantom matter with positive energy density violates the null
energy condition, it is useful to ask how much of such a signal could instead
be mimicked by analyzing a slightly curved universe with a flat-FRW model. One
well-known degeneracy is that between spatial curvature and dark energy in
cosmological fits \cite{Clarkson:2007bc,Planck:2018vyg,DES:2022ccp}. If the
true universe has small positive curvature, \(k=+1\), but flatness is assumed
in the data analysis, the inferred dark-energy sector must absorb the curvature
contribution to the expansion rate, potentially biasing the reconstructed
equation of state toward phantom-like values.

We use the standard observational convention
\[
\Omega_{k,0}=-\frac{k}{a_0^2H_0^2},
\]
so positive spatial curvature corresponds to \(\Omega_{k,0}<0\). Consider a true universe containing matter, a cosmological constant, and small spatial curvature:
\begin{equation}
H^2(z)
=
H_0^2\left[
\Omega_{m,0}(1+z)^3
+\Omega_{\Lambda,0}
+\Omega_{k,0}(1+z)^2
\right].
\label{eq:true-curved-H}
\end{equation}
Suppose this expansion history is reconstructed using a flat model with the same \(H_0\) and \(\Omega_{m,0}\), so that the curvature contribution is absorbed into an effective dark-energy sector:
\begin{equation}
H^2_{\mathrm{flat}}(z)
=
H_0^2\left[
\Omega_{m,0}(1+z)^3
+\Omega_{\mathrm{DE},0}(1+z)^{3(1+w_{\mathrm{eff}})}
\right].
\label{eq:flat-reconstruction-H}
\end{equation}
If the reconstructed flat model is normalized at \(z=0\), then
\begin{equation}
\Omega_{\mathrm{DE},0}
=
\Omega_{\Lambda,0}+\Omega_{k,0}.
\label{eq:effective-DE-normalization}
\end{equation}
Equating the non-matter contributions in Eqs.~\eqref{eq:true-curved-H} and \eqref{eq:flat-reconstruction-H} gives the finite-redshift constant-\(w\) equivalent
\begin{equation}
(1+z)^{3(1+w_{\mathrm{eff}}(z))}
=
\frac{\Omega_{\Lambda,0}+\Omega_{k,0}(1+z)^2}
{\Omega_{\Lambda,0}+\Omega_{k,0}} .
\label{eq:weff-definition-corrected}
\end{equation}
Therefore
\begin{equation}
w_{\mathrm{eff}}(z)
=
-1+
\frac{1}{3\log(1+z)}
\log\!\left[
\frac{\Omega_{\Lambda,0}+\Omega_{k,0}(1+z)^2}
{\Omega_{\Lambda,0}+\Omega_{k,0}}
\right].
\label{eq:weff-log}
\end{equation}
This quantity should be interpreted as a finite-redshift, constant-\(w\) diagnostic: it is the value of a constant equation-of-state parameter that reproduces the same effective dark-energy density ratio between \(z=0\) and the chosen redshift \(z\). It is not identical to the local equation of state used in a CPL parametrization.

For comparison, the local diagnostic obtained from
\[
\rho_{\mathrm{DE,eff}}(z)
\propto
\Omega_{\Lambda,0}+\Omega_{k,0}(1+z)^2
\]
is
\begin{equation}
\label{eq:wloc}
w_{\mathrm{loc}}(z)
=
-1+
\frac{1+z}{3\rho_{\mathrm{DE,eff}}}
\frac{d\rho_{\mathrm{DE,eff}}}{dz}
=
-1+
\frac{2\Omega_{k,0}(1+z)^2}
{3\left[\Omega_{\Lambda,0}+\Omega_{k,0}(1+z)^2\right]} .
\end{equation}
Both diagnostics show the same qualitative effect: if \(\Omega_{k,0}<0\), as for a closed universe, the curvature contribution biases a flat reconstruction toward \(w<-1\). The size of the effect, however, is limited by observational constraints on curvature \cite{Planck:2018vyg,DESI:2025zgx}.
\newpage
\begin{table}[t]
\centering
\begin{tabular}{ccc}
\hline\hline
Target \(w_{\mathrm{eff}}(1)\) & \(\Delta w\equiv -(w_{\mathrm{eff}}+1)\) &
Required \(\Omega_{k,0}\) \\
\hline
\(-1.10\) & \(0.10\) & \(-0.041\) \\
\(-1.15\) & \(0.15\) & \(-0.057\) \\
\(-1.26\) & \(0.26\) & \(-0.086\) \\
\hline\hline
\end{tabular}
\caption{Curvature density \(\Omega_{k,0}\) required to mimic several finite-redshift constant-\(w\) phantom depths at \(z=1\), using Eq.~\eqref{eq:weff-log} with \(\Omega_{\Lambda,0}=0.70\). The required curvature is far larger in magnitude than current CMB+BAO bounds permit \cite{Planck:2018vyg,DESI:2025zgx}.}
\label{tab:curvature-required-corrected}
\end{table}

\subsection{Scale comparison to DESI-like phantom preferences}

As a rough scale comparison, DESI+Planck fits using a CPL parametrization \cite{DESI:2025zgx,Planck:2018vyg} can prefer the \(w_0>-1\), \(w_a<0\) region, with representative values corresponding to \(w(z=1)\) substantially below \(-1\). Since a CPL value is a local parametrized equation of state whereas Eq.~\eqref{eq:weff-log} is a finite-redshift constant-\(w\) equivalent, the comparison should not be interpreted as an exact model fit. It is nevertheless useful for estimating the curvature amplitude required to mimic a large phantom-like shift.

Solving Eq.~\eqref{eq:weff-log} at \(z=1\), with \(\Omega_{\Lambda,0}=0.70\), gives
\begin{equation}
\Omega_{k,0}
=
\Omega_{\Lambda,0}
\frac{2^{3(1+w_{\mathrm{eff}})}-1}
{4-2^{3(1+w_{\mathrm{eff}})}} .
\label{eq:Ok-required-corrected}
\end{equation}
Representative values are shown in Table~\ref{tab:curvature-required-corrected}.

Thus a large phantom-like signal cannot be explained by spatial curvature alone without introducing curvature far beyond present observational limits. For example, taking the conservative scale \(|\Omega_{k,0}|\lesssim 0.004\), Eq.~\eqref{eq:weff-log} gives at \(z=1\)
\begin{equation}
\Delta w_{\mathrm{eff}}
\equiv
-\bigl(w_{\mathrm{eff}}+1\bigr)
\simeq 8.4\times 10^{-3},
\label{eq:weff-max-corrected}
\end{equation}
while the local diagnostic gives
\begin{equation}
\Delta w_{\mathrm{loc}}
\equiv
-\bigl(w_{\mathrm{loc}}+1\bigr)
\simeq 1.6\times 10^{-2}.
\label{eq:wloc-max-corrected}
\end{equation}
The two estimates differ because they diagnose different notions of effective equation of state, although both are only percent-level effects.

\subsection{Discussion}

Curvature can bias flat-model reconstructions slightly below \(w=-1\), but under current bounds the effect is only percent-level. It therefore cannot explain a large DESI-like phantom preference on its own, though it can contribute a controlled systematic bias in precision reconstructions. The conceptual point remains that an inferred \(w<-1\) in a flat reconstruction need not, by itself, imply exotic NEC-violating matter \cite{Planck:2018vyg,DESI:2025zgx}.

\section{Conclusions}
\label{sec:conclusions}

We have analyzed the relation between geodesic completeness, the averaged null
energy condition, and spatial curvature in FRW cosmology within classical
general relativity. Once the ANEC is imposed in its affine form along complete
radial null geodesics, non-static flat and open FRW geometries in the regular
class are obstructed, whereas the closed branch has an additional curvature
contribution that can support complete nonsingular evolution without exotic
null-energy violation.

For spatially flat and open models, the finite-interval ANEC identities contain
negative-definite bulk terms. With bounded curvature and null completeness, the
boundary contribution is controlled and one obtains
\begin{equation}
I_{\rm ANEC}<0
\end{equation}
for any non-static evolution in the regular endpoint class considered here.
Equivalently, within the same class, an ANEC-respecting flat or open model
cannot be null geodesically complete both to the past and to the future; the
flat static solution is the only marginal case.

The closed case is qualitatively different because the affine ANEC identity
contains the positive curvature term \(2\int a^{-3}dt\), which can compensate
the negative expansion contribution. The explicit cosh, quadratic, and
cubic-root examples show that scale factors requiring phantom or other
NEC-violating support in a flat realization can instead be supported in closed
FRW geometry by a cosmological constant or ordinary canonical scalar matter.
In such models the NEC and ANEC can be satisfied, while the SEC is violated
near the bounce or throughout an accelerated phase, as required by the
acceleration equation. Local NEC violation may occur in familiar semiclassical
settings, but a model requiring a negative affine average over every complete
null ray is much harder to regard as conventional.

The scalar-field reconstructions make the conclusion constructive. The
quadratic example admits an exact canonical scalar realization with real
\(\phi(t)\) and explicit \(V(\phi)\) under the stated parameter condition. The
cubic-root example, originally motivated by limiting-curvature dynamics, is
realized in ordinary GR as a closed FRW cosmology sourced by canonical scalar
matter, with the potential specified parametrically along the reconstructed
branch. The obstruction is therefore not to nonsingular cosmology itself, but
to nonsingular ANEC-compatible cosmology in the flat and open FRW branches.

The obstruction can be summarized as a selection principle: a non-static flat
or open FRW cosmology with bounded curvature and regular affine endpoints
cannot be both null geodesically complete and ANEC-satisfying. The closed
branch is the only constant-curvature FRW branch among \(k\in\{-1,0,+1\}\) for
which we have explicit nonsingular, null-complete, ANEC-compatible
realizations with ordinary NEC-respecting matter.

If one further asks for an empty FRW representative compatible with the same
closed branch, the natural vacuum endpoint is global de~Sitter space in closed
slicing, namely the positive-vacuum solution with \(k=+1\) and \(\Lambda>0\).
This is a selection principle rather than an additional no-go theorem: it
combines the flat/open obstruction with the empty-FRW classification and the
requirement that the empty representative lie in the same curvature sector as
the complete cosmology. Any nontrivial complete realization must violate the
SEC somewhere, while the examples above show that the NEC and ANEC can
nevertheless be satisfied.

Evading the flat/open obstruction would require leaving the exact homogeneous
and isotropic FRW ansatz or relaxing the regularity assumptions used in the
theorem. The statement should also not be read as an assertion that pointwise
energy conditions are fundamental. The ANEC is weaker and more global: it
allows localized negative null energy but forbids negative energy from
dominating the complete affine average. This makes it a useful discriminator
for cosmological model building.

The observational implications should be interpreted with care. The selection
of the closed branch does not require large curvature today. A closed
primordial universe can evolve into a regime where \(|\Omega_{k,0}|\) is very
small. Current data therefore constrain only the residual curvature today, not
the possibility that positive curvature played a decisive role in the
primordial completion; Planck+BAO and related analyses find
\(\Omega_{k,0}\) consistent with zero at the sub-percent level
\cite{Planck:2018vyg}. Conversely, if a flat reconstruction is imposed on a
slightly closed universe, curvature can bias the inferred dark-energy equation
of state toward phantom-like values. Our analytic estimate shows that this
effect is limited to the percent level under current curvature bounds, so
curvature alone cannot explain a large phantom signal, but it can contribute a
controlled systematic bias in precision reconstructions of \(w(z)\).

The broader lesson is that positive spatial curvature is not merely an
optional complication of FRW model building. In the presence of affine ANEC and
geodesic completeness, it is the geometric term that allows nonsingular eternal
cosmology to be compatible with ordinary matter. Closed FRW cosmology therefore
provides a minimal classical setting for a nontrivial, geodesically complete,
ANEC-respecting universe.

\acknowledgments
We thank M.~Dabrowski, P.~Davies and J.~Lesnefsky for valuable discussions. DAE is supported in part by the U.S. Department of Energy, Office of High Energy Physics, under Award Number DE-SC0019470.
\newpage
\appendix

\section{ANEC in FRW Spacetimes with Curvature}
\label{appaffine}

\subsection{FRW spacetime and metric}

We consider a Friedmann--Robertson--Walker (FRW) spacetime with spatial
curvature \(k=0,\pm1\), described in comoving coordinates by
\begin{equation}
ds^2
=
-dt^2
+
a(t)^2
\left(
\frac{dr^2}{1-k r^2}
+
r^2 d\Omega^2
\right),
\end{equation}
where \(a(t)\) is the scale factor and \(k\) is the spatial curvature
parameter.

\subsection{Affinely parameterized null geodesics}

Consider a radial null geodesic, so that \(d\Omega^2=0\). The null condition
\(ds^2=0\) gives
\begin{equation}
\frac{dr}{dt}
=
\pm \frac{\sqrt{1-k r^2}}{a(t)} .
\end{equation}
Let \(\lambda\) denote an affine parameter. The null tangent vector is
\begin{equation}
k^\mu
=
\frac{dx^\mu}{d\lambda}
=
\left(
\frac{dt}{d\lambda},
\frac{dr}{d\lambda},
0,
0
\right).
\end{equation}
Using the chain rule,
\begin{equation}
\frac{dr}{d\lambda}
=
\frac{dr}{dt}\frac{dt}{d\lambda}
=
\pm
\frac{\sqrt{1-k r^2}}{a(t)}
\frac{dt}{d\lambda}.
\end{equation}
Thus
\begin{equation}
k^\mu
=
\frac{dt}{d\lambda}
\left(
1,
\pm\frac{\sqrt{1-k r^2}}{a(t)},
0,
0
\right).
\end{equation}

To enforce affine parametrization, we impose
\begin{equation}
k^\nu\nabla_\nu k^\mu=0 .
\end{equation}
The \(t\)-component of the geodesic equation gives
\begin{equation}
\frac{d^2t}{d\lambda^2}
+
\frac{\dot a}{a}
\left(\frac{dt}{d\lambda}\right)^2
=0 .
\end{equation}
Writing \(p=dt/d\lambda\), this becomes
\begin{equation}
\frac{dp}{dt}
=
-\frac{\dot a}{a}p ,
\end{equation}
and hence
\begin{equation}
\frac{dt}{d\lambda}
=
\frac{E}{a(t)} .
\end{equation}
The constant \(E>0\) corresponds to a constant affine rescaling. It changes
the numerical value of the averaged integral by an overall positive factor, but
leaves the ANEC inequality and its sign invariant. We fix \(E=1\), obtaining
\begin{equation}
\frac{dt}{d\lambda}
=
\frac{1}{a(t)},
\qquad
d\lambda=a(t)\,dt .
\end{equation}
The affinely parameterized radial null tangent vector can therefore be written
as
\begin{equation}
k^\mu
=
\left(
\frac{1}{a(t)},
\pm\frac{\sqrt{1-k r^2}}{a(t)^2},
0,
0
\right).
\end{equation}

\subsection{ANEC integrand for a perfect fluid}

Assume a perfect fluid stress-energy tensor
\begin{equation}
T_{\mu\nu}
=
(\rho+p)u_\mu u_\nu
+
p\,g_{\mu\nu},
\end{equation}
with comoving four-velocity
\begin{equation}
u^\mu=(1,0,0,0),
\qquad
u_\mu=(-1,0,0,0).
\end{equation}
The null contraction is
\begin{align}
T_{\mu\nu}k^\mu k^\nu
&=
(\rho+p)(u_\mu k^\mu)^2
+
p\,g_{\mu\nu}k^\mu k^\nu .
\end{align}
Since \(k^\mu\) is null, the second term vanishes. Moreover,
\begin{equation}
u_\mu k^\mu
=
-\frac{1}{a(t)} ,
\end{equation}
so
\begin{equation}
T_{\mu\nu}k^\mu k^\nu
=
\frac{\rho(t)+p(t)}{a(t)^2}.
\end{equation}

\subsection{ANEC integral}

The averaged null energy condition states
\begin{equation}
\int_\gamma T_{\mu\nu}k^\mu k^\nu\,d\lambda \ge 0
\end{equation}
along a complete affinely parameterized null geodesic \(\gamma\). Using
\(d\lambda=a(t)\,dt\), the FRW radial-null ANEC becomes
\begin{equation}
\int_{-\infty}^{+\infty}
T_{\mu\nu}k^\mu k^\nu\,d\lambda
=
\int_{-\infty}^{+\infty}
\frac{\rho(t)+p(t)}{a(t)}\,dt .
\end{equation}
Thus, for FRW spacetimes with arbitrary spatial curvature,
\begin{equation}
I_{\rm ANEC}
=
\int_{-\infty}^{+\infty}
\frac{\rho(t)+p(t)}{a(t)}\,dt
\ge 0 .
\end{equation}
The overall positive affine normalization chosen above does not affect the sign
of this condition.
\section{Proofs for Section~\ref{sec:anec-k1}}
\label{app:anec}

\begin{lemma}[Curvature bounds control \(H\) and \(a\) in closed FRW]
In a closed \(k=+1\) FRW spacetime, boundedness of all scalar polynomial
curvature invariants implies that \(H\) is bounded and \(a\) is bounded away
from \(0\). In particular, bounded Kretschmann \(K\) gives
\(H^2+\tfrac{1}{a^2}\) bounded, hence \(|H|\le H_{\max}\) and
\(a\ge a_{\min}>0\). Similarly, bounded Ricci scalar bounds \(\dot H\),
although this is not needed below.
\end{lemma}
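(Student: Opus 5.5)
The plan is to compute the Kretschmann scalar of closed FRW explicitly and read off the bounds from it. For $ds^2=-dt^2+a^2 d\Omega_3^2$ the nonzero orthonormal-frame Riemann components are of two types. The "electric" ones are
\[
R_{\hat t\hat i\hat t\hat j}=-\frac{\ddot a}{a}\,\delta_{ij}.
\]
The "magnetic/spatial" ones are
\[
R_{\hat i\hat j\hat k\hat l}=\Bigl(H^2+\frac{1}{a^2}\Bigr)\bigl(\delta_{ik}\delta_{jl}-\delta_{il}\delta_{jk}\bigr).
\]
Summing squares gives
\[
K=R_{\mu\nu\rho\sigma}R^{\mu\nu\rho\sigma}
=12\left[\left(\frac{\ddot a}{a}\right)^2+\left(H^2+\frac{k}{a^2}\right)^2\right],
\]
which is the standard FRW expression. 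It is consistent with the flat formula used earlier for the quadratic bounce: there $\ddot a/a=2/(t^2+c)$ and $H^2=4t^2/(t^2+c)^2$, and the formula reproduces $48(c^2+2ct^2+5t^4)/(t^2+c)^4$. I will cite or sketch this computation rather than redo it.

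With $k=+1$, both terms in $K$ are sums of squares. Hence $K\le K_{\max}$ implies
\[
\Bigl(H^2+\frac1{a^2}\Bigr)^2\le \frac{K_{\max}}{12}.
\]
Each of $H^2$ and $1/a^2$ is nonnegative, so each is bounded by $\sqrt{K_{\max}/12}$. This gives
\[
|H|\le H_{\max}:=(K_{\max}/12)^{1/4},\qquad a\ge a_{\min}:=(12/K_{\max})^{1/4}>0.
\]
The key point is that $k=+1$ makes the curvature term enter with a positive sign, so no cancellation between $H^2$ and $k/a^2$ can occur. This is exactly where the closed case differs from the open case, for which $H^2-1/a^2$ could stay bounded while both pieces blow up.

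For the remark on $\dot H$, I would use $R=6(\dot H+2H^2+k/a^2)$. Bounded $R$ together with the bounds already obtained on $H^2$ and $1/a^2$ then bounds $\dot H$. Equivalently, one can use the bounded $\ddot a/a=\dot H+H^2$ term in $K$ directly.

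The only real obstacle is fixing the orthonormal-frame Riemann components and the factor 12 correctly. For the argument, though, only the structural fact matters: $K$ is a positive multiple of a sum of squares, one of which is $(H^2+1/a^2)^2$. So the precise constant is immaterial.
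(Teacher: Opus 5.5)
Your proposal is correct and takes the same approach as the paper: the paper also writes $K=12\left[\left(H^2+\tfrac{1}{a^2}\right)^2+\left(\dot H+H^2\right)^2\right]$ for closed FRW and reads the bounds on $H$ and $1/a$ off the nonnegative $(H^2+1/a^2)^2$ term (your $\ddot a/a$ equals $\dot H+H^2$). Your explicit constants, and your observation that bounded $R$ controls $\dot H$ only together with the bounds on $H^2$ and $1/a^2$, are correct additions.
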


\begin{lemma}[Integrability with bounded logarithmic derivative \(\Rightarrow\) tail]
Let \(a:(-\infty,\infty)\to(0,\infty)\) be \(C^1\) and suppose
\[
|H|=\left|\frac{\dot a}{a}\right|\le H_{\max}
\]
on a tail. If
\[
\int^\infty a(t)^{-3}\,dt<\infty
\]
on that tail, then \(a(t)\to\infty\) along that tail. The analogous statement
holds as \(t\to-\infty\).
\end{lemma}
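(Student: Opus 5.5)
The plan is to argue by contradiction using the logarithmic form of the bound on \(H\). Write \(\ell(t)=\ln a(t)\). Since \(a\in C^1\) and \(a>0\), \(\ell\) is \(C^1\) with \(\dot\ell=H\). Hence \(|\dot\ell|\le H_{\max}\) on the tail \([t_0,\infty)\), and \(\ell\) is Lipschitz there with constant \(H_{\max}\). The key point is that a Lipschitz bound on \(\ln a\) prevents \(a\) from staying small except over intervals of definite length. Each such interval then contributes a definite amount to \(\int a^{-3}\,dt\).

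First, suppose \(a(t)\not\to\infty\) as \(t\to\infty\). Then there are \(M<\infty\) and a sequence \(t_n\to\infty\) with \(a(t_n)\le M\). Passing to a subsequence, we may assume \(t_{n+1}-t_n\ge 2\). For \(|t-t_n|\le 1\), the Lipschitz bound gives
\[
\ell(t)\le \ell(t_n)+H_{\max}|t-t_n|\le \ln M+H_{\max},
\]
so \(a(t)\le Me^{H_{\max}}\) on \([t_n-1,t_n+1]\). Consequently
\[
\int_{t_n-1}^{t_n+1}a^{-3}\,dt\ \ge\ 2M^{-3}e^{-3H_{\max}}>0 .
\]
The intervals \([t_n-1,t_n+1]\) are disjoint and lie in the tail for large \(n\). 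Summing over \(n\) gives \(\int^\infty a^{-3}\,dt=+\infty\), which contradicts the hypothesis. Therefore \(\liminf_{t\to\infty}a(t)=\infty\), which is exactly \(a(t)\to\infty\).

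The case \(t\to-\infty\) is identical after the substitution \(t\mapsto -t\); this only flips the sign of \(H\), and \(|H|\) is unchanged.

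The only delicate step is recognizing that the relevant negation is "\(a\) is bounded along some sequence tending to infinity" (a \(\liminf\) statement), not "\(a\) is bounded on the whole tail". Once this is set up, the bounded logarithmic derivative spreads each small value of \(a\) over an interval of fixed width, and the contradiction follows. No regularity beyond \(C^1\) is used. Note that the lower bound \(a\ge a_{\min}\) from the preceding lemma is not needed here.
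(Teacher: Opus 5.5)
Your proof is correct and follows the paper's argument. You negate to get $t_n\to\infty$ with $a(t_n)\le M$, use $|\tfrac{d}{dt}\ln a|\le H_{\max}$ to keep $a$ bounded on disjoint fixed-width intervals around each $t_n$, and sum to force $\int^\infty a^{-3}\,dt=+\infty$. The only difference is that you fix the width at $1$ with bound $Me^{H_{\max}}$, whereas the paper takes $\delta=(\log 2)/H_{\max}$ with bound $2M$ and must treat $H_{\max}=0$ separately; your choice makes that special case unnecessary.
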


\begin{proof}
We prove the future statement; the past statement is identical. Suppose
\(a(t)\) does not tend to infinity. Then there are \(M<\infty\) and a sequence
\(t_n\to\infty\) with \(a(t_n)\le M\). Since
\[
\left|\frac{d}{dt}\log a\right|\le H_{\max},
\]
there exists a fixed \(\delta>0\) such that \(a(t)\le 2M\) on each interval
\([t_n-\delta,t_n+\delta]\). Passing to a subsequence, these intervals may be
chosen disjoint. Hence
\[
\int^{\infty} a(t)^{-3}\,dt
\ge
\sum_n \int_{t_n-\delta}^{t_n+\delta}\frac{dt}{(2M)^3}
=+\infty,
\]
contradicting the assumed integrability. Therefore \(a(t)\to\infty\).

If \(H_{\max}=0\), then \(a\) is constant on the tail and
\(\int a^{-3}dt\) diverges, so the claim is immediate.  Otherwise choose
\(\delta=(\log 2)/H_{\max}\).
\end{proof}

\begin{proof}[Proof of Theorem~\ref{thm:ANEC-criterion}]
We work in units \(8\pi G=1\) and signature \((- + + +)\).

\medskip\noindent\textit{Step 0: FRW identities.}
For \(k=+1\), the Einstein equations give
\begin{equation}
\label{eq:FRW-eqs}
H^2+\frac{1}{a^2}=\frac{\rho}{3},
\qquad
\dot H-\frac{1}{a^2}=-\frac{\rho+p}{2},
\end{equation}
with \(H=\dot a/a\). Along a radial null geodesic we set
\(k^t=dt/d\lambda=1/a\), so \(d\lambda=a\,dt\). Therefore, on a finite
interval \([T_-,T_+]\),
\begin{equation}
\label{eq:ANEC-start}
\int_{T_-}^{T_+}\frac{\rho+p}{a}\,dt
=
2\int_{T_-}^{T_+}\frac{dt}{a^{3}}
-
2\int_{T_-}^{T_+}\frac{\dot H}{a}\,dt,
\end{equation}
using \eqref{eq:FRW-eqs}.

\medskip\noindent\textit{Step 1: Geometric identity \eqref{eq:anec-split}.}
Since
\[
\frac{d}{dt}\!\left(\frac{H}{a}\right)
=
\frac{\dot H}{a}-\frac{H^2}{a},
\]
we have
\[
\frac{\dot H}{a}
=
\frac{d}{dt}\!\left(\frac{H}{a}\right)+\frac{H^2}{a}.
\]
Substituting into \eqref{eq:ANEC-start} gives
\[
I(T_-,T_+)
=
-2\left[\frac{H}{a}\right]_{T_-}^{T_+}
+
2\int_{T_-}^{T_+}
\left(\frac{1}{a^{3}}-\frac{H^{2}}{a}\right)dt .
\]
Taking the two-sided improper limit
\(T_-\to-\infty\), \(T_+\to+\infty\), whenever it exists in
\(\mathbb R\cup\{\pm\infty\}\), gives \eqref{eq:anec-split}.

\medskip\noindent\textit{Step 2: Curvature bounds imply \(H\) bounded and \(a\) bounded away from \(0\).}
By assumption, all scalar curvature invariants are bounded. In closed FRW,
the Kretschmann scalar can be written as
\[
K
=
12\left[
\left(H^2+\frac{1}{a^2}\right)^2
+
\left(\dot H+H^2\right)^2
\right].
\]
Thus bounded \(K\) implies \(H^2+\tfrac{1}{a^2}\) is bounded, and therefore
both \(H\) and \(1/a\) are bounded. Hence there exist constants
\(H_{\max}<\infty\) and \(a_{\min}>0\) such that
\[
|H(t)|\le H_{\max},
\qquad
a(t)\ge a_{\min}
\]
for all \(t\).

\medskip\noindent\textit{Step 3: Consequences for the boundary term.}
For any finite interval \([T_-,T_+]\),
\[
\left|\left[\frac{H}{a}\right]_{T_-}^{T_+}\right|
\le
\frac{|H(T_+)|}{a(T_+)}
+
\frac{|H(T_-)|}{a(T_-)}
\le
\frac{2H_{\max}}{a_{\min}}<\infty .
\]
Thus the boundary term is uniformly bounded along arbitrary two-sided
truncations.

\medskip\noindent\textit{Step 4: (a) Divergence under
\(\int a^{-3}=+\infty\) and \(\int H^2/a<\infty\).}
Let
\[
I_1(T_-,T_+)=2\int_{T_-}^{T_+}a^{-3}\,dt,
\qquad
I_2(T_-,T_+)=2\int_{T_-}^{T_+}\frac{H^2}{a}\,dt,
\]
and
\[
B(T_-,T_+)=2\left[\frac{H}{a}\right]_{T_-}^{T_+}.
\]
Then
\[
I(T_-,T_+)=I_1(T_-,T_+)-I_2(T_-,T_+)-B(T_-,T_+).
\]
By hypothesis, \(I_1(T_-,T_+)\to+\infty\) and
\(I_2(T_-,T_+)\to L<\infty\) as
\(T_-\to-\infty\), \(T_+\to+\infty\), while \(B(T_-,T_+)\) is uniformly
bounded by Step~3. Hence
\[
I_{\mathrm{ANEC}}=+\infty .
\]

\medskip\noindent\textit{Step 5: (b) Finiteness under
\(\int a^{-3}<\infty\) and \(\int H^2/a<\infty\).}
If
\[
\int_{-\infty}^{\infty}a^{-3}\,dt<\infty,
\]
then the preceding lemma applies because Step~2 gives a bounded logarithmic
derivative, \(|H|\le H_{\max}\). Thus \(a(t)\to\infty\) as
\(t\to\pm\infty\). Since \(H\) is bounded, \(H/a\to0\) as \(t\to\pm\infty\).
Therefore
\[
\left[\frac{H}{a}\right]_{T_-}^{T_+}\to0
\]
as \(T_-\to-\infty\), \(T_+\to+\infty\). The two bulk integrals converge by
hypothesis, so \(I(T_-,T_+)\) converges to a finite limit. Therefore
\(I_{\mathrm{ANEC}}\) is finite.
\end{proof}

\begin{cor}[Integrable inverse-scale tails]
Under the bounded-curvature hypotheses of Theorem~\ref{thm:ANEC-criterion}, if
in addition
\[
\int_{-\infty}^{\infty} a(t)^{-1}\,dt<\infty,
\]
for example in exponential-tail models, then
\[
\int_{-\infty}^{\infty}\frac{H^2}{a}\,dt<\infty,
\qquad
\int_{-\infty}^{\infty}a^{-3}\,dt<\infty,
\]
the boundary term in \eqref{eq:anec-split} vanishes, and
\(I_{\mathrm{ANEC}}\) is finite.
\end{cor}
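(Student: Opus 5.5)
The plan is to bound the two bulk integrands by positive multiples of \(1/a\), and to show that the boundary term vanishes by applying the tail lemma of this appendix.

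First I invoke Step~2 of the proof of Theorem~\ref{thm:ANEC-criterion}. Bounded curvature gives constants \(|H|\le H_{\max}\) and \(a\ge a_{\min}>0\) on all of \(\mathbb R\). From these, two pointwise bounds follow:
\[
0\le\frac{H^2}{a}\le H_{\max}^2\,\frac{1}{a},\qquad
0<\frac{1}{a^3}\le\frac{1}{a_{\min}^2}\,\frac{1}{a}.
\]
Both integrands are continuous and nonnegative, so comparison with the integrable function \(1/a\) shows that \(\int_{-\infty}^{\infty}H^2/a\,dt\) and \(\int_{-\infty}^{\infty}a^{-3}\,dt\) are finite.

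Next I handle the boundary term. Since \(\int a^{-3}\,dt<\infty\) on each tail and \(|H|\le H_{\max}\), the integrability lemma above (integrability with bounded logarithmic derivative implies the tail grows) gives \(a(t)\to\infty\) as \(t\to\pm\infty\). Because \(H\) is bounded, \(H/a\to0\) at both ends. The one-sided limits of the boundary term therefore exist and are zero, so \(\bigl[H/a\bigr]_{T_-}^{T_+}\to0\) along arbitrary two-sided truncations. There is a more direct route as well: \(a\ge a_{\min}\) and \(\int a^{-1}<\infty\) already force \(a\to\infty\) along the tails, by the same disjoint-interval argument, since \(\log a\) is Lipschitz.

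Finally I pass to the limit in the finite-interval identity \eqref{eq:anec-split}. Each of the three pieces converges separately: the boundary term tends to \(0\), and the two bulk terms tend to finite nonnegative numbers, by monotone convergence of nonnegative integrands. It is therefore legitimate to separate the limits, as the theorem's convention requires, and \(I_{\mathrm{ANEC}}=2\int a^{-3}\,dt-2\int H^2/a\,dt\) is finite. This is exactly case~(b) of Theorem~\ref{thm:ANEC-criterion}, so I could equally conclude by citing (b) once the two integrability statements are established.

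The only point needing care is the vanishing of the boundary term. Integrability of \(1/a\) does not by itself give pointwise decay of \(1/a\); it needs the bounded logarithmic derivative supplied by the curvature bound. That is why I route this step through the tail lemma rather than asserting it directly.
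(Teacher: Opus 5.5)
Your proof is correct and is essentially the paper's argument. You use the same comparison bounds \(H^2/a\le H_{\max}^2/a\) and \(a^{-3}\le a_{\min}^{-2}a^{-1}\) for the two integrability statements, then \(a\to\infty\) with bounded \(H\) to kill the boundary term in \eqref{eq:anec-split}; the one difference is that you justify \(a\to\infty\) through the bounded-logarithmic-derivative tail lemma, whereas the paper simply asserts it from \(\int a^{-1}\,dt<\infty\), and your version supplies the justification that assertion needs, since integrability of \(1/a\) alone does not force pointwise growth.
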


\begin{proof}
Since \(H\) is bounded,
\[
\int_{-\infty}^{\infty}\frac{H^2}{a}\,dt
\le
H_{\max}^2
\int_{-\infty}^{\infty}\frac{dt}{a}
<\infty .
\]
Since bounded curvature gives \(a\ge a_{\min}>0\), one also has
\[
a^{-3}\le a_{\min}^{-2}a^{-1},
\]
and hence
\[
\int_{-\infty}^{\infty}a^{-3}\,dt<\infty .
\]
Finally, \(\int a^{-1}dt<\infty\) implies \(a(t)\to\infty\) on both ends,
so \(H/a\to0\). Applying \eqref{eq:anec-split} gives the result.
\end{proof}

\section{ANEC in Open \texorpdfstring{\((k=-1)\)}{(k=-1)} FRW Spacetimes}
\label{sec:anec-open}

For completeness we record the analogue of our main theorems for open
\((k=-1)\) FRW universes. The line element is
\begin{equation}
ds^2
=
-dt^2
+
a(t)^2
\left(
\frac{dr^2}{1+r^2}
+
r^2\,d\Omega^2
\right).
\end{equation}
A radial null geodesic satisfies
\[
\frac{dr}{dt}
=
\pm\frac{\sqrt{1+r^2}}{a(t)}
\]
and admits the affine normalization
\begin{equation}
\frac{dt}{d\lambda}=\frac{1}{a(t)}
\qquad\Longleftrightarrow\qquad
d\lambda=a(t)\,dt,
\end{equation}
so that
\begin{equation}
k^\mu
=
\frac{dx^\mu}{d\lambda}
=
\left(
\frac{1}{a(t)},
\,
\pm\frac{\sqrt{1+r^2}}{a(t)^2},
\,
0,
\,
0
\right).
\end{equation}
For a perfect fluid,
\[
T_{\mu\nu}k^\mu k^\nu
=
\frac{\rho+p}{a(t)^2}.
\]
Thus the affine ANEC functional is
\begin{equation}
I_{\mathrm{ANEC}}
=
\int_{-\infty}^{+\infty}
\frac{\rho(t)+p(t)}{a(t)}\,dt .
\end{equation}

\subsection*{Finite-interval identity}

Using the Einstein equations for \(k=-1\),
\begin{equation}
H^2-\frac{1}{a^2}=\frac{\rho}{3},
\qquad
\dot H+\frac{1}{a^2}=-\frac{\rho+p}{2},
\end{equation}
we have
\[
\rho+p=-2\dot H-\frac{2}{a^2}.
\]
Therefore, on any finite interval \([T_-,T_+]\),
\[
I(T_-,T_+)
=
-2\int_{T_-}^{T_+}\frac{\dot H}{a}\,dt
-
2\int_{T_-}^{T_+}\frac{dt}{a^3}.
\]
Using
\[
\frac{\dot H}{a}
=
\frac{d}{dt}\left(\frac{H}{a}\right)+\frac{H^2}{a},
\]
one obtains
\begin{equation}
I(T_-,T_+)
=
-2\left[\frac{H}{a}\right]_{T_-}^{T_+}
-
2\int_{T_-}^{T_+}
\left(
\frac{H^2}{a}
+
\frac{1}{a^3}
\right)dt .
\label{eq:open-finite-identity}
\end{equation}
The improper limit
\[
I_{\mathrm{ANEC}}
=
\lim_{T_-\to-\infty,\;T_+\to+\infty}
I(T_-,T_+)
\]
is taken in the extended reals whenever it exists.

\subsection*{Theorem (Open FRW obstruction)}

Let \((M,g)\) be a spatially open \((k=-1)\) FRW spacetime with
\(a(t)\in C^2\), \(a(t)>0\), and non-static scale factor.

\begin{enumerate}[label=\textup{(\roman*)}]
\item If the one-sided limits
\[
\lim_{t\to\pm\infty}\frac{H}{a}
\]
exist and are finite, and if the affine ANEC exists as an extended improper
integral with
\[
I_{\mathrm{ANEC}}\ge0,
\]
then \((M,g)\) is null-incomplete in at least one time direction.

\item If the one-sided limits
\[
\lim_{t\to\pm\infty}\frac{H}{a}
\]
exist and are finite and the spacetime is null-complete in both time
directions, then the affine ANEC exists as an extended improper integral and
\[
I_{\mathrm{ANEC}}<0 .
\]
Consequently, within this regular endpoint class, an ANEC-satisfying open FRW
model is null-incomplete in at least one time direction.
\end{enumerate}

\begin{proof}
The proof follows from the finite-interval identity
\eqref{eq:open-finite-identity}. Let
\[
Q(T_-,T_+)
=
\int_{T_-}^{T_+}
\left(
\frac{H^2}{a}
+
\frac{1}{a^3}
\right)dt .
\]
The integrand is nonnegative, so the two-sided improper limit
\(Q\in[0,+\infty]\) exists.

Assume first that the spacetime is null-complete in both time directions and
that the one-sided limits of \(H/a\) exist and are finite. If either endpoint
limit is nonzero, then on that end
\[
\left|\frac{H}{a}\right|\ge c>0
\]
for sufficiently large \(|t|\), and hence
\[
\frac{H^2}{a}
=
\left(\frac{H}{a}\right)^2 a
\ge c^2 a .
\]
Null completeness gives \(\int a(t)\,dt=\infty\) on that end, so
\(Q=+\infty\), and therefore \(I_{\mathrm{ANEC}}=-\infty\).

If instead \(Q<\infty\), then in particular
\[
\int_{-\infty}^{+\infty}\frac{H^2}{a}\,dt<\infty .
\]
The same argument as in Theorem~\ref{thm:k0-i} shows that null completeness
forces
\[
\lim_{t\to+\infty}\frac{H}{a}
=
\lim_{t\to-\infty}\frac{H}{a}
=0 .
\]
Thus the boundary term in \eqref{eq:open-finite-identity} vanishes in the
two-sided limit. Since \(a(t)>0\), the term \(a^{-3}\) is strictly positive,
so \(Q>0\). Hence
\[
I_{\mathrm{ANEC}}=-2Q<0 .
\]
This proves item (ii). Item (i) follows by contrapositive.
\end{proof}

\subsection*{Remark}

Relative to the flat \((k=0)\) case, open models acquire an additional negative
bulk contribution,
\[
-2\int a(t)^{-3}\,dt,
\]
which strengthens the flat obstruction. Thus, within the regular endpoint
class considered here, a null-complete, non-static open FRW universe necessarily
violates the affine ANEC.

\bibliography{curvature_classification_references}

@article{Burwig:2025hrr,
    author = "Burwig, Nathan L. and Easson, Damien A.",
    title = "{Open case for a closed universe}",
    eprint = "2510.13971",
    archivePrefix = "arXiv",
    primaryClass = "hep-th",
    doi = "10.1103/mn3v-myzc",
    journal = "Phys. Rev. D",
    volume = "113",
    number = "8",
    pages = "083530",
    year = "2026"
}

@article{Easson:2024uxe,
  author        = {Easson, Damien A. and Lesnefsky, Joseph E.},
  title         = {{Inflationary resolution of the initial singularity}},
  eprint        = {2402.13031},
  archivePrefix = {arXiv},
  primaryClass  = {hep-th},
  doi           = {10.1016/j.physletb.2026.140370},
  journal       = {Phys. Lett. B},
  volume        = {875},
  pages         = {140370},
  year          = {2026}
}

@article{Easson:2024fzn,
  author        = {Easson, Damien A. and Lesnefsky, Joseph E.},
  title         = {{Eternal universes}},
  eprint        = {2404.03016},
  archivePrefix = {arXiv},
  primaryClass  = {hep-th},
  doi           = {10.1103/5mhz-m8bg},
  journal       = {Phys. Rev. D},
  volume        = {112},
  number        = {6},
  pages         = {063545},
  year          = {2025}
}

@article{Lesnefsky:2022fen,
  author        = {Lesnefsky, J. E. and Easson, D. A. and Davies, P. C. W.},
  title         = {{Past-completeness of inflationary spacetimes}},
  eprint        = {2207.00955},
  archivePrefix = {arXiv},
  primaryClass  = {gr-qc},
  doi           = {10.1103/PhysRevD.107.044024},
  journal       = {Phys. Rev. D},
  volume        = {107},
  number        = {4},
  pages         = {044024},
  year          = {2023}
}

@article{Penrose:1964wq,
  author  = {Penrose, Roger},
  title   = {{Gravitational collapse and space-time singularities}},
  doi     = {10.1103/PhysRevLett.14.57},
  journal = {Phys. Rev. Lett.},
  volume  = {14},
  pages   = {57--59},
  year    = {1965}
}

@article{Hawking:1966sx,
  author  = {Hawking, Stephen},
  title   = {{The Occurrence of singularities in cosmology}},
  doi     = {10.1098/rspa.1966.0221},
  journal = {Proc. Roy. Soc. Lond. A},
  volume  = {294},
  pages   = {511--521},
  year    = {1966}
}

@article{Hawking:1966jv,
  author  = {Hawking, Stephen},
  title   = {{The Occurrence of singularities in cosmology. II}},
  doi     = {10.1098/rspa.1966.0255},
  journal = {Proc. Roy. Soc. Lond. A},
  volume  = {295},
  pages   = {490--493},
  year    = {1966}
}

@article{Hawking:1967ju,
  author  = {Hawking, Stephen},
  title   = {{The occurrence of singularities in cosmology. III. Causality and singularities}},
  doi     = {10.1098/rspa.1967.0164},
  journal = {Proc. Roy. Soc. Lond. A},
  volume  = {300},
  pages   = {187--201},
  year    = {1967}
}

@article{Hawking:1970zqf,
  author  = {Hawking, S. W. and Penrose, R.},
  title   = {{The Singularities of gravitational collapse and cosmology}},
  doi     = {10.1098/rspa.1970.0021},
  journal = {Proc. Roy. Soc. Lond. A},
  volume  = {314},
  pages   = {529--548},
  year    = {1970}
}

@book{Hawking:1973uf,
  author    = {Hawking, Stephen W. and Ellis, George F. R.},
  title     = {{The Large Scale Structure of Space-Time}},
  publisher = {Cambridge University Press},
  series    = {Cambridge Monographs on Mathematical Physics},
  year      = {1973},
  doi       = {10.1017/9781009253161}
}

@article{Borde:2001nh,
  author        = {Borde, Arvind and Guth, Alan H. and Vilenkin, Alexander},
  title         = {{Inflationary space-times are incomplete in past directions}},
  eprint        = {gr-qc/0110012},
  archivePrefix = {arXiv},
  reportNumber  = {MIT-CTP-3183},
  doi           = {10.1103/PhysRevLett.90.151301},
  journal       = {Phys. Rev. Lett.},
  volume        = {90},
  pages         = {151301},
  year          = {2003}
}

@article{Kinney:2023urn,
  author        = {Kinney, William H. and Maity, Suvashis and Sriramkumar, L.},
  title         = {{Borde-Guth-Vilenkin theorem in extended de Sitter spaces}},
  eprint        = {2307.10958},
  archivePrefix = {arXiv},
  primaryClass  = {gr-qc},
  doi           = {10.1103/PhysRevD.109.043519},
  journal       = {Phys. Rev. D},
  volume        = {109},
  number        = {4},
  pages         = {043519},
  year          = {2024}
}

@article{Geshnizjani:2023hyd,
  author        = {Geshnizjani, Ghazal and Ling, Eric and Quintin, Jerome},
  title         = {{On the initial singularity and extendibility of flat quasi-de Sitter spacetimes}},
  eprint        = {2305.01676},
  archivePrefix = {arXiv},
  primaryClass  = {gr-qc},
  doi           = {10.1007/JHEP10(2023)182},
  journal       = {JHEP},
  volume        = {10},
  number        = {10},
  pages         = {182},
  year          = {2023}
}

@article{Epstein:1965zza,
  author  = {Epstein, H. and Glaser, V. and Jaffe, A.},
  title   = {{Nonpositivity of energy density in quantized field theories}},
  doi     = {10.1007/BF02749799},
  journal = {Nuovo Cim.},
  volume  = {36},
  pages   = {1016},
  year    = {1965}
}

@article{Casimir:1948dh,
  author  = {Casimir, H. B. G.},
  title   = {{On the attraction between two perfectly conducting plates}},
  journal = {Indag. Math.},
  volume  = {10},
  number  = {4},
  pages   = {261--263},
  year    = {1948}
}

@article{Barcelo:2002bv,
  author        = {Barcelo, Carlos and Visser, Matt},
  title         = {{Twilight for the energy conditions?}},
  eprint        = {gr-qc/0205066},
  archivePrefix = {arXiv},
  doi           = {10.1142/S0218271802002888},
  journal       = {Int. J. Mod. Phys. D},
  volume        = {11},
  pages         = {1553--1560},
  year          = {2002}
}

@article{Rubakov:2014jja,
  author        = {Rubakov, V. A.},
  title         = {{The Null Energy Condition and its violation}},
  eprint        = {1401.4024},
  archivePrefix = {arXiv},
  primaryClass  = {hep-th},
  doi           = {10.3367/UFNe.0184.201402b.0137},
  journal       = {Phys. Usp.},
  volume        = {57},
  pages         = {128--142},
  year          = {2014}
}

@article{Hartman:2016lgu,
  author        = {Hartman, Thomas and Kundu, Sandipan and Tajdini, Amirhossein},
  title         = {{Averaged Null Energy Condition from Causality}},
  eprint        = {1610.05308},
  archivePrefix = {arXiv},
  primaryClass  = {hep-th},
  doi           = {10.1007/JHEP07(2017)066},
  journal       = {JHEP},
  volume        = {07},
  number        = {7},
  pages         = {066},
  year          = {2017}
}

@article{Graham:2007va,
  author        = {Graham, Noah and Olum, Ken D.},
  title         = {{Achronal averaged null energy condition}},
  eprint        = {0705.3193},
  archivePrefix = {arXiv},
  primaryClass  = {gr-qc},
  doi           = {10.1103/PhysRevD.76.064001},
  journal       = {Phys. Rev. D},
  volume        = {76},
  pages         = {064001},
  year          = {2007}
}

@article{Wall:2009wi,
  author        = {Wall, Aron C.},
  title         = {{Proving the Achronal Averaged Null Energy Condition from the Generalized Second Law}},
  eprint        = {0910.5751},
  archivePrefix = {arXiv},
  primaryClass  = {gr-qc},
  doi           = {10.1103/PhysRevD.81.024038},
  journal       = {Phys. Rev. D},
  volume        = {81},
  pages         = {024038},
  year          = {2010}
}

@article{Wald:1991xn,
  author  = {Wald, Robert M. and Yurtsever, U.},
  title   = {{General proof of the averaged null energy condition for a massless scalar field in two-dimensional curved space-time}},
  doi     = {10.1103/PhysRevD.44.403},
  journal = {Phys. Rev. D},
  volume  = {44},
  pages   = {403--416},
  year    = {1991}
}

@article{Graham:2005cq,
  author        = {Graham, Noah and Olum, Ken D.},
  title         = {{Plate with a hole obeys the averaged null energy condition}},
  eprint        = {hep-th/0506136},
  archivePrefix = {arXiv},
  doi           = {10.1103/PhysRevD.72.025013},
  journal       = {Phys. Rev. D},
  volume        = {72},
  pages         = {025013},
  year          = {2005}
}

@article{Fewster:2006uf,
  author        = {Fewster, Christopher J. and Olum, Ken D. and Pfenning, Michael J.},
  title         = {{Averaged null energy condition in spacetimes with boundaries}},
  eprint        = {gr-qc/0609007},
  archivePrefix = {arXiv},
  doi           = {10.1103/PhysRevD.75.025007},
  journal       = {Phys. Rev. D},
  volume        = {75},
  pages         = {025007},
  year          = {2007}
}

@article{Visser:1996iv,
  author        = {Visser, Matt},
  title         = {{Gravitational vacuum polarization. 2: Energy conditions in the Boulware vacuum}},
  eprint        = {gr-qc/9604008},
  archivePrefix = {arXiv},
  doi           = {10.1103/PhysRevD.54.5116},
  journal       = {Phys. Rev. D},
  volume        = {54},
  pages         = {5116--5122},
  year          = {1996}
}

@article{Morris:1988tu,
  author  = {Morris, M. S. and Thorne, K. S. and Yurtsever, U.},
  title   = {{Wormholes, Time Machines, and the Weak Energy Condition}},
  doi     = {10.1103/PhysRevLett.61.1446},
  journal = {Phys. Rev. Lett.},
  volume  = {61},
  pages   = {1446--1449},
  year    = {1988}
}

@article{Friedman:1993ty,
  author        = {Friedman, John L. and Schleich, Kristin and Witt, Donald M.},
  title         = {{Topological censorship}},
  eprint        = {gr-qc/9305017},
  archivePrefix = {arXiv},
  doi           = {10.1103/PhysRevLett.71.1486},
  journal       = {Phys. Rev. Lett.},
  volume        = {71},
  pages         = {1486--1489},
  year          = {1993},
  note          = {[Erratum: Phys. Rev. Lett. 75, 1872 (1995)]}
}

@article{Ellis:2002we,
  author        = {Ellis, George F. R. and Maartens, Roy},
  title         = {{The emergent universe: Inflationary cosmology with no singularity}},
  eprint        = {gr-qc/0211082},
  archivePrefix = {arXiv},
  doi           = {10.1088/0264-9381/21/1/015},
  journal       = {Class. Quant. Grav.},
  volume        = {21},
  pages         = {223--232},
  year          = {2004}
}

@article{Ellis:2003qz,
  author        = {Ellis, George F. R. and Murugan, Jeff and Tsagas, Christos G.},
  title         = {{The Emergent universe: An Explicit construction}},
  eprint        = {gr-qc/0307112},
  archivePrefix = {arXiv},
  doi           = {10.1088/0264-9381/21/1/016},
  journal       = {Class. Quant. Grav.},
  volume        = {21},
  number        = {1},
  pages         = {233--250},
  year          = {2004}
}

@article{Sahni:1991ks,
  author       = {Sahni, Varun and Feldman, Hume and Stebbins, Albert},
  title        = {{Loitering universe}},
  reportNumber = {PRINT-91-0070 (CITA,TORONTO)},
  doi          = {10.1086/170910},
  journal      = {Astrophys. J.},
  volume       = {385},
  pages        = {1--8},
  year         = {1992}
}

@article{Graham:2011nb,
  author        = {Graham, Peter W. and Horn, Bart and Kachru, Shamit and Rajendran, Surjeet and Torroba, Gonzalo},
  title         = {{A Simple Harmonic Universe}},
  eprint        = {1109.0282},
  archivePrefix = {arXiv},
  primaryClass  = {hep-th},
  doi           = {10.1007/JHEP02(2014)029},
  journal       = {JHEP},
  volume        = {02},
  pages         = {029},
  year          = {2014}
}

@article{Molina-Paris:1998xmn,
  author        = {Molina-Paris, Carmen and Visser, Matt},
  title         = {{Minimal conditions for the creation of a Friedman-Robertson-Walker universe from a bounce}},
  eprint        = {gr-qc/9810023},
  archivePrefix = {arXiv},
  doi           = {10.1016/S0370-2693(99)00469-4},
  journal       = {Phys. Lett. B},
  volume        = {455},
  pages         = {90--95},
  year          = {1999}
}

@article{Easson:2011zy,
  author        = {Easson, Damien A. and Sawicki, Ignacy and Vikman, Alexander},
  title         = {{G-Bounce}},
  eprint        = {1109.1047},
  archivePrefix = {arXiv},
  primaryClass  = {hep-th},
  doi           = {10.1088/1475-7516/2011/11/021},
  journal       = {JCAP},
  volume        = {11},
  pages         = {021},
  year          = {2011}
}

@article{Cai:2012va,
  author        = {Cai, Yi-Fu and Easson, Damien A. and Brandenberger, Robert},
  title         = {{Towards a Nonsingular Bouncing Cosmology}},
  eprint        = {1206.2382},
  archivePrefix = {arXiv},
  primaryClass  = {hep-th},
  doi           = {10.1088/1475-7516/2012/08/020},
  journal       = {JCAP},
  volume        = {08},
  pages         = {020},
  year          = {2012}
}

@article{Nicolis:2009qm,
  author        = {Nicolis, Alberto and Rattazzi, Riccardo and Trincherini, Enrico},
  title         = {{Energy's and amplitudes' positivity}},
  eprint        = {0912.4258},
  archivePrefix = {arXiv},
  primaryClass  = {hep-th},
  doi           = {10.1007/JHEP05(2010)095},
  journal       = {JHEP},
  volume        = {05},
  pages         = {095},
  year          = {2010},
  note          = {[Erratum: JHEP 11, 128 (2011)]}
}

@article{Dubovsky:2005xd,
  author        = {Dubovsky, S. and Gregoire, T. and Nicolis, A. and Rattazzi, R.},
  title         = {{Null energy condition and superluminal propagation}},
  eprint        = {hep-th/0512260},
  archivePrefix = {arXiv},
  doi           = {10.1088/1126-6708/2006/03/025},
  journal       = {JHEP},
  volume        = {03},
  pages         = {025},
  year          = {2006}
}

@article{Battefeld:2014uga,
  author        = {Battefeld, D. and Peter, Patrick},
  title         = {{A Critical Review of Classical Bouncing Cosmologies}},
  eprint        = {1406.2790},
  archivePrefix = {arXiv},
  primaryClass  = {astro-ph.CO},
  doi           = {10.1016/j.physrep.2014.12.004},
  journal       = {Phys. Rept.},
  volume        = {571},
  pages         = {1--66},
  year          = {2015}
}

@article{Brandenberger:2016vhg,
  author        = {Brandenberger, Robert and Peter, Patrick},
  title         = {{Bouncing Cosmologies: Progress and Problems}},
  eprint        = {1603.05834},
  archivePrefix = {arXiv},
  primaryClass  = {hep-th},
  doi           = {10.1007/s10701-016-0057-0},
  journal       = {Found. Phys.},
  volume        = {47},
  number        = {6},
  pages         = {797--850},
  year          = {2017}
}

@article{Novello:2008ra,
  author        = {Novello, M. and Bergliaffa, S. E. Perez},
  title         = {{Bouncing Cosmologies}},
  eprint        = {0802.1634},
  archivePrefix = {arXiv},
  primaryClass  = {astro-ph},
  doi           = {10.1016/j.physrep.2008.04.006},
  journal       = {Phys. Rept.},
  volume        = {463},
  pages         = {127--213},
  year          = {2008}
}

@article{Markov1982,
  author  = {Markov, M. A.},
  title   = {{Maximal Curvature Conjecture}},
  journal = {JETP Lett.},
  volume  = {36},
  pages   = {265--269},
  year    = {1982}
}

@article{markov1987,
  author  = {Markov, M. A.},
  title   = {{Pisma Zh. Eksp. Teor. Fiz}},
  journal = {Pisma Zh. Eksp. Teor. Fiz.},
  volume  = {46},
  pages   = {342},
  year    = {1987}
}

@article{STAROBINSKY198099,
  author  = {Starobinsky, A. A.},
  title   = {{A new type of isotropic cosmological models without singularity}},
  doi     = {10.1016/0370-2693(80)90670-X},
  journal = {Phys. Lett. B},
  volume  = {91},
  number  = {1},
  pages   = {99--102},
  year    = {1980}
}

@article{Frolov1989,
  author  = {Frolov, V. P. and Markov, M. A. and Mukhanov, V. F.},
  title   = {{Through A Black Hole Into A New Universe?}},
  journal = {Phys. Lett. B},
  volume  = {216},
  pages   = {272},
  year    = {1989},
  doi     = {10.1016/0370-2693(89)91114-3}
}

@article{Frolov1990,
  author  = {Frolov, V. P. and Markov, M. A. and Mukhanov, V. F.},
  title   = {{Black Holes as Possible Sources of Closed and Semiclosed Worlds}},
  journal = {Phys. Rev. D},
  volume  = {41},
  pages   = {383},
  year    = {1990},
  doi     = {10.1103/PhysRevD.41.383}
}

@article{Mukhanov1992,
  author  = {Mukhanov, V. F. and Brandenberger, R. H.},
  title   = {{A Nonsingular universe}},
  journal = {Phys. Rev. Lett.},
  volume  = {68},
  pages   = {1969},
  year    = {1992},
  doi     = {10.1103/PhysRevLett.68.1969}
}

@article{Brandenberger1993,
  author  = {Brandenberger, R. H. and Mukhanov, V. F. and Sornborger, A.},
  title   = {{A Cosmological theory without singularities}},
  journal = {Phys. Rev. D},
  volume  = {48},
  pages   = {1629},
  year    = {1993},
  doi     = {10.1103/PhysRevD.48.1629},
  note    = {[gr-qc/9303001]}
}

@article{Chamseddine:2016uef,
  author        = {Chamseddine, Ali H. and Mukhanov, Viatcheslav},
  title         = {{Resolving Cosmological Singularities}},
  eprint        = {1612.05860},
  archivePrefix = {arXiv},
  primaryClass  = {gr-qc},
  doi           = {10.1088/1475-7516/2017/03/009},
  journal       = {JCAP},
  volume        = {03},
  number        = {3},
  pages         = {009},
  year          = {2017}
}

@article{Chamseddine2017ktu,
  author  = {Chamseddine, A. H. and Mukhanov, V.},
  title   = {{Nonsingular Black Hole}},
  journal = {Eur. Phys. J. C},
  volume  = {77},
  number  = {3},
  pages   = {183},
  year    = {2017},
  doi     = {10.1140/epjc/s10052-017-4759-z},
  note    = {[arXiv:1612.05861 [gr-qc]]}
}

@article{Frolov:2021afd,
  author        = {Frolov, Valeri P. and Zelnikov, Andrei},
  title         = {{Spherically symmetric black holes in the limiting curvature theory of gravity}},
  eprint        = {2111.12846},
  archivePrefix = {arXiv},
  primaryClass  = {gr-qc},
  doi           = {10.1103/PhysRevD.105.024041},
  journal       = {Phys. Rev. D},
  volume        = {105},
  number        = {2},
  pages         = {024041},
  year          = {2022}
}

@article{Caldwell:1999ew,
  author        = {Caldwell, R. R.},
  title         = {{A Phantom menace?}},
  eprint        = {astro-ph/9908168},
  archivePrefix = {arXiv},
  doi           = {10.1016/S0370-2693(02)02589-3},
  journal       = {Phys. Lett. B},
  volume        = {545},
  pages         = {23--29},
  year          = {2002}
}

@article{Carroll:2003st,
  author        = {Carroll, Sean M. and Hoffman, Mark and Trodden, Mark},
  title         = {{Can the dark energy equation-of-state parameter $w$ be less than $-1$?}},
  eprint        = {astro-ph/0301273},
  archivePrefix = {arXiv},
  doi           = {10.1103/PhysRevD.68.023509},
  journal       = {Phys. Rev. D},
  volume        = {68},
  pages         = {023509},
  year          = {2003}
}

@article{Caldwell:2003vq,
  author        = {Caldwell, Robert R. and Kamionkowski, Marc and Weinberg, Nevin N.},
  title         = {{Phantom energy and cosmic doomsday}},
  eprint        = {astro-ph/0302506},
  archivePrefix = {arXiv},
  doi           = {10.1103/PhysRevLett.91.071301},
  journal       = {Phys. Rev. Lett.},
  volume        = {91},
  pages         = {071301},
  year          = {2003}
}

@article{Vikman:2004dc,
  author        = {Vikman, Alexander},
  title         = {{Can dark energy evolve to the phantom?}},
  eprint        = {astro-ph/0407107},
  archivePrefix = {arXiv},
  doi           = {10.1103/PhysRevD.71.023515},
  journal       = {Phys. Rev. D},
  volume        = {71},
  pages         = {023515},
  year          = {2005}
}

@article{Clarkson:2007bc,
  author        = {Clarkson, Chris and Cortes, Marina and Bassett, Bruce A.},
  title         = {{Dynamical Dark Energy or Simply Cosmic Curvature?}},
  eprint        = {astro-ph/0702670},
  archivePrefix = {arXiv},
  doi           = {10.1088/1475-7516/2007/08/011},
  journal       = {JCAP},
  volume        = {08},
  pages         = {011},
  year          = {2007}
}

@article{DESI:2025zgx,
  author        = {Abdul Karim, M. and others},
  collaboration = {DESI},
  title         = {{DESI DR2 results. II. Measurements of baryon acoustic oscillations and cosmological constraints}},
  eprint        = {2503.14738},
  archivePrefix = {arXiv},
  primaryClass  = {astro-ph.CO},
  doi           = {10.1103/tr6y-kpc6},
  journal       = {Phys. Rev. D},
  volume        = {112},
  number        = {8},
  pages         = {083515},
  year          = {2025}
}

@article{Planck:2018vyg,
  author        = {Aghanim, N. and others},
  collaboration = {Planck},
  title         = {{Planck 2018 results. VI. Cosmological parameters}},
  eprint        = {1807.06209},
  archivePrefix = {arXiv},
  primaryClass  = {astro-ph.CO},
  doi           = {10.1051/0004-6361/201833910},
  journal       = {Astron. Astrophys.},
  volume        = {641},
  pages         = {A6},
  year          = {2020},
  note          = {[Erratum: Astron. Astrophys. 652, C4 (2021)]}
}

@article{DES:2022ccp,
  author        = {Abbott, T. M. C. and others},
  collaboration = {DES},
  title         = {{Dark Energy Survey Year 3 results: Constraints on extensions to $\Lambda$CDM with weak lensing and galaxy clustering}},
  eprint        = {2207.05766},
  archivePrefix = {arXiv},
  primaryClass  = {astro-ph.CO},
  doi           = {10.1103/PhysRevD.107.083504},
  journal       = {Phys. Rev. D},
  volume        = {107},
  number        = {8},
  pages         = {083504},
  year          = {2023}
}

@article{Chatterjee:2012zh,
  author        = {Chatterjee, Saugata and Easson, Damien A. and Parikh, Maulik},
  title         = {{Energy conditions in the Jordan frame}},
  eprint        = {1212.6430},
  archivePrefix = {arXiv},
  primaryClass  = {gr-qc},
  doi           = {10.1088/0264-9381/30/23/235031},
  journal       = {Class. Quant. Grav.},
  volume        = {30},
  pages         = {235031},
  year          = {2013}
}

@article{Kobayashi:2010cm,
  author        = {Kobayashi, Tsutomu and Yamaguchi, Masahide and Yokoyama, Jun'ichi},
  title         = {{G-inflation: Inflation driven by the Galileon field}},
  eprint        = {1008.0603},
  archivePrefix = {arXiv},
  primaryClass  = {hep-th},
  doi           = {10.1103/PhysRevLett.105.231302},
  journal       = {Phys. Rev. Lett.},
  volume        = {105},
  pages         = {231302},
  year          = {2010}
}

@article{Deffayet:2010qz,
  author        = {Deffayet, Cedric and Pujolas, Oriol and Sawicki, Ignacy and Vikman, Alexander},
  title         = {{Imperfect Dark Energy from Kinetic Gravity Braiding}},
  eprint        = {1008.0048},
  archivePrefix = {arXiv},
  primaryClass  = {hep-th},
  doi           = {10.1088/1475-7516/2010/10/026},
  journal       = {JCAP},
  volume        = {10},
  pages         = {026},
  year          = {2010}
}

@article{Sawicki:2012pz,
  author        = {Sawicki, Ignacy and Vikman, Alexander},
  title         = {{Hidden Negative Energies in Strongly Accelerated Universes}},
  eprint        = {1209.2961},
  archivePrefix = {arXiv},
  primaryClass  = {astro-ph.CO},
  doi           = {10.1103/PhysRevD.87.067301},
  journal       = {Phys. Rev. D},
  volume        = {87},
  pages         = {067301},
  year          = {2013}
}

@inproceedings{Kinney:2026kvw,
    author = "Kinney, William H.",
    title = "{Geodesic Completeness in General Cosmological Scenarios}",
    eprint = "2604.20809",
    archivePrefix = "arXiv",
    primaryClass = "gr-qc",
    month = "4",
    year = "2026"
}

@article{DESI:2025fii,
    author = "Lodha, K. and others",
    collaboration = "DESI",
    title = "{Extended dark energy analysis using DESI DR2 BAO measurements}",
    eprint = "2503.14743",
    archivePrefix = "arXiv",
    primaryClass = "astro-ph.CO",
    reportNumber = "FERMILAB-PUB-25-0164-PPD",
    doi = "10.1103/w4c6-1r5j",
    journal = "Phys. Rev. D",
    volume = "112",
    number = "8",
    pages = "083511",
    year = "2025"
}

@article{Linde:1995xm,
    author = "Linde, Andrei D.",
    title = "{Inflation with variable Omega}",
    eprint = "hep-th/9503097",
    archivePrefix = "arXiv",
    reportNumber = "SU-ITP-95-5",
    doi = "10.1016/0370-2693(95)00370-Z",
    journal = "Phys. Lett. B",
    volume = "351",
    pages = "99--104",
    year = "1995"
}

@article{Linde:2003hc,
    author = "Linde, Andrei D.",
    title = "{Can we have inflation with Omega {\ensuremath{>}} 1?}",
    eprint = "astro-ph/0303245",
    archivePrefix = "arXiv",
    doi = "10.1088/1475-7516/2003/05/002",
    journal = "JCAP",
    volume = "05",
    pages = "002",
    year = "2003"
}

@preprint{Easson:2026ret,
    author = "Easson, Damien A.",
    title = "{Geodesically Complete Curvature-Bounce Inflation}",
    eprint = "2604.27103",
    archivePrefix = "arXiv",
    primaryClass = "astro-ph.CO",
    month = "4",
    year = "2026"
}

@article{Moghtaderi:2025cns,
    author = "Moghtaderi, Elly and Hull, Brayden R. and Quintin, Jerome and Geshnizjani, Ghazal",
    title = "{How much null-energy-condition breaking can the Universe endure?}",
    eprint = "2503.19955",
    archivePrefix = "arXiv",
    primaryClass = "gr-qc",
    doi = "10.1103/j6hp-p8hs",
    journal = "Phys. Rev. D",
    volume = "111",
    number = "12",
    pages = "123552",
    year = "2025"
}

@article{Cline:2003gs,
    author = "Cline, James M. and Jeon, Sangyong and Moore, Guy D.",
    title = "{The Phantom menaced: Constraints on low-energy effective ghosts}",
    eprint = "hep-ph/0311312",
    archivePrefix = "arXiv",
    reportNumber = "MCGILL-03-25",
    doi = "10.1103/PhysRevD.70.043543",
    journal = "Phys. Rev. D",
    volume = "70",
    pages = "043543",
    year = "2004"
}

@article{Woodard:2006nt,
    author = "Woodard, Richard P.",
    editor = "Papantonopoulos, Lefteris",
    title = "{Avoiding dark energy with 1/r modifications of gravity}",
    eprint = "astro-ph/0601672",
    archivePrefix = "arXiv",
    reportNumber = "UFIFT-QG-06-02",
    doi = "10.1007/978-3-540-71013-4_14",
    journal = "Lect. Notes Phys.",
    volume = "720",
    pages = "403--433",
    year = "2007"
}

@article{Woodard:2015zca,
    author = "Woodard, Richard P.",
    title = "{Ostrogradsky's theorem on Hamiltonian instability}",
    eprint = "1506.02210",
    archivePrefix = "arXiv",
    primaryClass = "hep-th",
    reportNumber = "UFIFT-QG-15-03",
    doi = "10.4249/scholarpedia.32243",
    journal = "Scholarpedia",
    volume = "10",
    number = "8",
    pages = "32243",
    year = "2015"
}

@article{Ostrogradsky:1850fid,
  author  = {Ostrogradsky, M.},
  title   = {M\'emoires sur les \'equations diff\'erentielles, relatives au probl\`eme des isop\'erim\`etres},
  journal = {Mem. Acad. St. Petersbourg},
  volume  = {6},
  number  = {4},
  pages   = {385--517},
  year    = {1850}
}
\end{document}